\tikzstyle{buffer} = [rectangle split, rectangle split parts=2, draw, rectangle split horizontal,
\tikzstyle{chunk} = [circle, draw, text centered, text width=1em]
\tikzstyle{line} = [draw, -latex]
\newcommand*{\var}[1]{\mathord{\mathit{#1}}}
\newcommand*{\fun}[1]{\mathord{\mathit{#1}}}
\newcommand*{\constr}[1]{\mathord{\mathit{#1}}}
\newcommand{\consts}{\mathcal{C}}
\newcommand{\vars}{\mathcal{V}}
\newcommand{\test}[3]{\texttt{=}(\var{#1},\var{#2},\var{#3})}
\newcommand{\DDelta}{\mathbf{\Delta}}
\newcommand{\DDeltap}[1]{\mathbf{\Delta}_{#1}^\circledcirc}
\newcommand{\chunkmerge}{\circ}
\newcommand{\nil}{\mathtt{nil}}
\newcommand{\chunk}{\mathtt{chunk}}
\newcommand{\cvar}{\fun{cvar}}
\newcommand{\actionreq}[3]{\texttt{+}(\var{#1},\var{#2},\var{#3})}
\newcommand{\actionmod}[3]{\texttt{=}(\var{#1},\var{#2},\var{#3})}
\newcommand{\fvars}[1]{\fun{vars}(#1)}
\newcommand{\actions}{\mathcal{A}}
\newcommand{\states}{\mathcal{S}}
\newcommand{\utrans}{\rightarrowtail}
\newcommand{\applynorule}{\fun{update}}
\newcommand{\applyrule}{\fun{apply}}
\newcommand{\cogstates}{\Gamma}
\newcommand{\cogstatespart}{\Gamma_{\mathord{\mathrm{part}}}}
\newcommand{\rules}{\Sigma}
\newcommand{\traces}{\Pi}
\newcommand{\delayrule}{\delta}
\newcommand{\addinfo}{\upsilon}
\newcommand{\addinfos}{\Upsilon}
\newcommand{\types}{\mathbb{T}}
\newcommand{\buffers}{\mathbb{B}}
\newcommand{\val}{\fun{val}}
\newcommand{\id}{\fun{id}}
\newcommand{\matches}{\sqsubseteq}
\newcommand{\comb}{\sqcup}
\newcommand{\bigcomb}{\bigsqcup}
\newcommand{\newtime}{\vartheta}
\newcommand{\veryabstract}{\mathit{va}}
\newcommand{\abstr}{\mathit{abs}}
\begin{document}

\title[An Operational Semantics for ACT-R and its Translation to CHR]{An Operational Semantics for the Cognitive Architecture ACT-R and its Translation to Constraint Handling Rules}  
\author{Daniel Gall}
\author{Thom Frühwirth}
\affiliation{%
  \institution{Ulm University}
  \department{Institute of Software Engineering and Programming Languages}
  \city{Ulm}
  \postcode{89069}
  \country{Germany}
}

\begin{abstract}
Computational psychology has the aim to explain human cognition by computational models of cognitive processes. The cognitive architecture Adaptive Control of Thought -- Rational (ACT-R) is popular to develop such models. Although ACT-R has a well-defined psychological theory and has been used to explain many cognitive processes, there are two problems that make it hard to reason formally about its cognitive models: First, ACT-R lacks a computational formalization of its underlying production rule system and secondly, there are many different implementations and extensions of ACT-R with many technical artifacts complicating formal reasoning even more.

This paper describes a formal operational semantics -- the \emph{very abstract semantics} -- that abstracts from as many technical details as possible keeping it open to extensions and different implementations of the ACT-R theory. In a second step, this semantics is refined to define some of its abstract features that are found in many implementations of ACT-R -- called the \emph{abstract semantics}. It concentrates on the procedural core of ACT-R and is suitable for analysis of the general transition system since it still abstracts from details like timing, the sub-symbolic layer of ACT-R or conflict resolution.

Furthermore, a translation of ACT-R models to the declarative programming language Constraint Handling Rules (CHR) is defined. This makes the abstract semantics an executable specification of ACT-R. CHR has been used successfully to embed other rule-based formalisms like graph transformation systems or functional programming. There are many theoretical results and practical tools that support formal reasoning about and analysis of CHR programs. The translation of ACT-R models to CHR is proven sound and complete w.r.t. the abstract operational semantics of ACT-R. This paves the way to analysis of ACT-R models through CHR analysis results and tools. Therefore, to the best of our knowledge, our abstract semantics is the first abstract formulation of ACT-R suitable for both analysis and execution.
\end{abstract}

%
%

\begin{CCSXML}
<ccs2012>
<concept>
<concept_id>10003752.10010124.10010131.10010134</concept_id>
<concept_desc>Theory of computation~Operational semantics</concept_desc>
<concept_significance>500</concept_significance>
</concept>
<concept>
<concept_id>10010147.10010178.10010216.10010217</concept_id>
<concept_desc>Computing methodologies~Cognitive science</concept_desc>
<concept_significance>500</concept_significance>
</concept>
<concept>
<concept_id>10010405.10010455.10010459</concept_id>
<concept_desc>Applied computing~Psychology</concept_desc>
<concept_significance>500</concept_significance>
</concept>
<concept>
<concept_id>10011007.10011006.10011008.10011009.10011015</concept_id>
<concept_desc>Software and its engineering~Constraint and logic languages</concept_desc>
<concept_significance>500</concept_significance>
</concept>
<concept>
<concept_id>10011007.10011006.10011039.10011311</concept_id>
<concept_desc>Software and its engineering~Semantics</concept_desc>
<concept_significance>500</concept_significance>
</concept>
</ccs2012>
\end{CCSXML}

\ccsdesc[500]{Theory of computation~Operational semantics}
\ccsdesc[500]{Computing methodologies~Cog\-ni\-tive science}
\ccsdesc[500]{Applied computing~Psychology}
\ccsdesc[500]{Software and its engineering~Constraint and logic languages}
\ccsdesc[500]{Software and its engineering~Semantics}

%
%

\keywords{Computational cognitive modeling, ACT-R, operational semantics, source to source transformation, Constraint Handling Rules}


\maketitle


\section{Introduction}

Computational cognitive modeling tries to explore human cognition by building detailed computational models of cognitive processes \cite{sun_introduction_2008}. Cognitive architectures support the modeling process by providing a formal, well-investigated theory of cognition that allows for building cognitive models of specific tasks and cognitive features.

Currently, computational cognitive modeling architectures as well as the implementations of cognitive models are typically ad-hoc constructs. They lack a formalization from the computer science point of view. For instance, \emph{Adaptive Control of Thought -- Rational (ACT-R)} \cite{anderson_integrated_2004} is a widely employed cognitive architecture. It is a modular production rule system with a special architecture of the working memory that operates on data stored as so-called \emph{chunks}, i.e. the unit of knowledge in the human brain. It has a well-defined psychological theory, however, its computational system is not described formally leading to implementations that are full of technical artifacts as claimed in \cite{albrecht_2014a,stewart_deconstructing_2007}, for instance. This impedes formal reasoning about the underlying languages and the programmed models. It makes it hard to compare different implementation variants of the languages. Furthermore, it complicates verifying properties of the models. These issues call for a formal semantics of cognitive modeling languages together with proper analysis techniques.

In this paper, we describe a very abstract formulation of the operational semantics of ACT-R, called the \emph{very abstract semantics}. This formalization of the fundamental parts of ACT-R is the basis for the specification of concrete implementations of the theory of the architecture, as well as for analysis of cognitive models. This very abstract semantics, extending and adapting the work of \cite{albrecht_2014a}, captures all possible kinds of ACT-R implementations. This can be too abstract to formally reason about actual computational models meaningfully. 

Therefore, we give a concrete instance of this very abstract semantics that is suitable for analysis and implementation -- the \emph{abstract semantics}. It still abstracts from a variety of technical details like conflict resolution, times and latency that strongly depend on the actual implementation or even configuration of ACT-R, but includes the typical matching process of rules and some common actions that can be extended. This paves the way for computational analysis of actual cognitive models. 

Eventually, we construct a sound and complete \emph{translation} of ACT-R models in abstract semantics to Constraint Handling Rules (CHR). The abstract semantics is therefore directly executable via its CHR representation making it, to the best of our knowledge, the first formal operational semantics of ACT-R that is suitable for analysis and execution.

The paper is a revised and extended version of our prior work in \cite{gall_ppdp2015} and \cite{gall_ruleml2016}. It has the following contributions:
\begin{enumerate}
 \item A very abstract operational semantics of ACT-R (c.f. section~\ref{sec:very_abstract_semantics}), \label{enum:very_abstract}
 \item an instance of this semantics for analysis (abstract semantics, c.f. section~\ref{sec:abstract_semantics_variant}), \label{enum:abstract}
 \item a translation of ACT-R models to the declarative programming language Constraint Handling Rules (CHR) (c.f. section~\ref{sec:translation}) and \label{enum:translation}
 \item a soundness and completeness proof of the translation w.r.t. the operational semantics (c.f. section~\ref{sec:soundness_completeness}). \label{enum:translation:proofs} 
\end{enumerate}
The formulations of the semantics (\ref{enum:very_abstract} and~\ref{enum:abstract}) have been improved compared to \cite{gall_ppdp2015} and the translation (\ref{enum:translation}) has been revised substantially compared to \cite{gall_ruleml2016} making it suitable for the proofs. The soundness and completeness proofs (\ref{enum:translation:proofs}) have not been published before and are fundamentally new.

\emph{Constraint Handling Rules (CHR)} \cite{fru_chr_book_2009} has a formally defined operational semantics as well as a declarative semantics with corresponding soundness and completeness results. There are many theoretical and practical results and tools for analysis of CHR programs \cite{fru_chr_book_2009}. Due to its strengths in formal program analysis and its strong relation to first order \cite{fru_chr_book_2009} and linear logic \cite{fru_chr_book_2009,betz_linlog_2005}, it has been used as a lingua franca that embeds many rule-based approaches \cite{fru_chr_book_2009} like term rewriting systems \cite{raiser_towards_trs_2008}, graph transformation systems \cite{raiser_gts_chr_2007,raiser_analysis_gts_2011} and business rules \cite{martin_fages_business_rules_cpdc07}. Such embeddings have been used successfully to make the analysis results of CHR available to other approaches. The sound and complete embedding of ACT-R in CHR enables the use of these results and tools to formally reason about cognitive models.

The paper is structured as follows:
We first give a short introduction to ACT-R in section~\ref{sec:preliminaries:actr}. The formal definition of the very abstract semantics is given in section~\ref{sec:very_abstract_semantics} and the  the abstract semantics is defined as an instance in section~\ref{sec:abstract_semantics_variant}. The translation scheme of abstract ACT-R models to CHR programs is described in section~\ref{sec:translation} and proven sound and complete w.r.t. the operational semantics in section~\ref{sec:soundness_completeness}. Related work is discussed in section~\ref{sec:related_work} with a detailed comparison to prior work.

\section{Description of ACT-R}
\label{sec:preliminaries:actr}

In this section, we describe ACT-R informally. For a detailed introduction to the theory, we refer to \cite{AndersonLe98,anderson_integrated_2004,taatgen_modeling_2006,anderson_how_2007}. Adaptive Control of Thought -- Rational (ACT-R) is a popular cognitive architecture that is used in many cognitive models to describe and explain human cognition. There have been applications in language learning models \cite{taatgen_why_2002} or in improving human computer interaction by the predictions of a cognitive model \cite{byrne2001act}. The components of the ACT-R architecture even have been mapped to brain regions \cite[chapter 2]{anderson_how_2007}.

Using a cognitive architecture like ACT-R simplifies the modeling process, since well-investigated psychological results have been assembled to a unified theory about fundamental parts of human cognition. In the best-case, such an architecture constrains modeling to only plausible cognitive models \cite{taatgen_modeling_2006}. Computational cognitive models are described clearly and unambiguously since they are executed by a computer producing detailed simulations of human behavior \cite{sun_introduction_2008}. By performing the same experiments on humans and the implemented cognitive models, the resulting data can be compared and models can be validated.

\subsection{Overview of the ACT-R Architecture}
\label{sec:preliminaries:actr:overview}

The ACT-R theory is built around a modular production rule system operating on data elements called \emph{chunks}. A chunk is a structure consisting of a name and a set of labeled slots that are connected to other chunks. The slots of a chunk are determined by its \emph{type}. The names of the chunks are only for internal reference -- the information represented by a network of chunks comes from the connections. For instance, there could be chunks representing the cognitive concepts of numbers 1, 2, \dots{} By chunks with slots \emph{number} and \emph{successor} we can connect the individual numbers to an ordered sequence describing the concept of natural numbers. This is illustrated in figure~\ref{fig:natural_numbers}. 

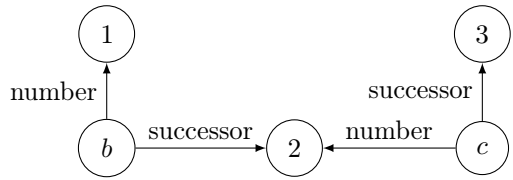
\begin{figure}[htb]
\centering
\tikzstyle{slot} = [draw, -latex]   

\begin{tikzpicture}[node distance = 2.5cm, auto]
 \node[chunk] (b) {$b$}; 
 \node[chunk, above of=b, node distance=1.5cm] (b-first) {1}; 
 \node[chunk, right of=b] (b-second) {2}; 

 \node[chunk, right of=b-second] (c) {$c$}; 
 \node[chunk, above of=c, node distance=1.5cm] (c-second) {3}; 

 \path[slot] (b) -- node {number} (b-first);
 \path[slot] (b) -- node[above] {successor} (b-second);
 \path[slot] (c) -- node[above] {number} (b-second);
 \path[slot] (c) -- node {successor} (c-second);
\end{tikzpicture}
\caption{Two count facts with names $b$ and $c$ that model the counting chain 1, 2, 3.}
\label{fig:natural_numbers}
\end{figure}

As shown in figure~\ref{fig:act_r_architecture}, ACT-R consists of modules. The \emph{goal module} keeps track of the current (sub-) goal of the cognitive model. The \emph{declarative module} contains \emph{declarative knowledge}, i.e. factual knowledge that is represented by a network of chunks. There are also modules for interaction with the environment like the \emph{visual} and the \emph{manual} module. The first perceives the visual field whereas the latter controls the hands of the cognitive agent. Each module is connected to a set of \emph{buffers} that can hold at most one chunk at a time.

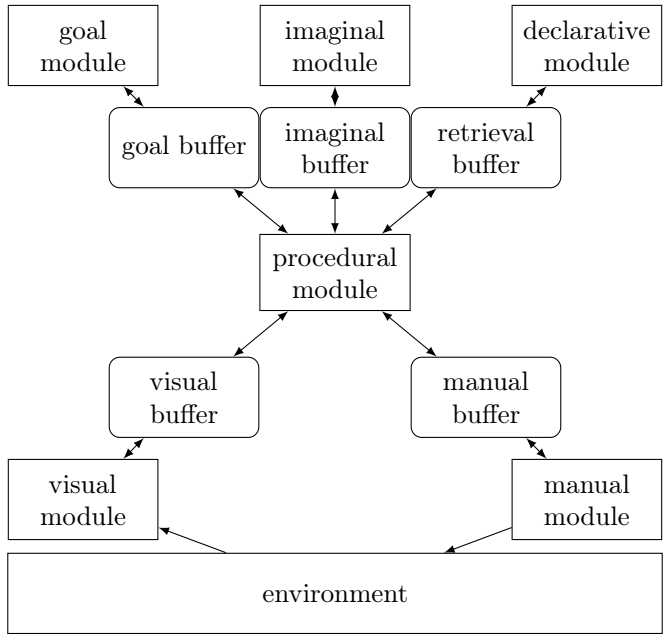
\begin{figure}[htb]
\begin{center}
\tikzstyle{module} = [rectangle, draw,
    text width=5em, text centered, minimum height=3em]
\tikzstyle{buffer} = [rectangle, draw, 
    text width=5em, text centered, rounded corners, minimum height=3em]
\tikzstyle{line} = [draw, -latex]
\tikzstyle{doubleline} = [draw, latex-latex]

\usetikzlibrary{calc}
    
\begin{tikzpicture}[node distance = 1.9cm, auto]
 \matrix[row sep=6mm] (modules) {
 \node[module] (goalmod) {goal module}; 
 \node[buffer, below right of=goalmod] (goalbuf) {goal buffer}; 
&  \node[module] (imaginalmod) {imaginal module}; 
 \draw let
    \p1=(goalbuf.center),
    \p2=(imaginalmod.south) in
 node[buffer] at (\x2,\y1) (imaginalbuf) {imaginal buffer}; &
 \node[module] (declmod) {declarative module}; 
 \node[buffer, below left of=declmod] (declbuf) {retrieval buffer}; 
\\
 & \node[module, minimum height=1cm, minimum width=2cm] (procmod) {procedural module}; \\

 \node[module] (visualmod) {visual module}; 
 \node[buffer, above right of=visualmod] (visualbuf) {visual buffer}; 
& &
 \node[module] (manualmod) {manual module}; 
 \node[buffer, above left of=manualmod] (manualbuf) {manual buffer}; 
\\
};

\draw let
    \p1=(procmod.center),
    \p2=(visualmod.south),
     \p3=(visualmod.west),
    \p4=(manualmod.east) in
 node[module,   minimum width=\x4-\x3, text width=6em] at (\x1,\y2-7mm) (environment) {environment}; 

  \path [doubleline] (goalmod) -- (goalbuf);
\path [doubleline] (goalbuf) -- (procmod);
 \path [doubleline] (declmod) -- (declbuf);
\path [doubleline] (declbuf) -- (procmod);
 \path [doubleline] (imaginalmod) -- (imaginalbuf);
\path [doubleline] (imaginalbuf) -- (procmod);
  \path [doubleline] (visualmod) -- (visualbuf);
\path [doubleline] (visualbuf) -- (procmod);
 \path [doubleline] (manualmod) -- (manualbuf);
\path [doubleline] (manualbuf) -- (procmod);
\path [line] (environment) -- (visualmod);
 \path [line] (manualmod) -- (environment);
\end{tikzpicture}
\end{center}
\caption{Modular architecture of ACT-R. This illustration is inspired by \cite{taatgen_modeling_2006} and \cite{anderson_integrated_2004}.}
\label{fig:act_r_architecture}
\end{figure}

The heart of the system is the \emph{procedural module} that contains the production rules controlling the cognitive process. It only has access to a part of the declarative knowledge: the chunks that are in the buffers. A production rule matches the content of the buffers and -- if applicable -- executes its actions. There are three types of actions: 
\begin{description}
 \item[Modifications] overwrite information in a subset of the slots of a buffer, i.e. they change the connections of a chunk.
 \item[Requests] ask a module to put new information into its buffer. The request is encoded in form of a chunk. The implementation of the module defines how it reacts on a request. For instance, there are modules that only accept chunks of a certain form like the manual module that only accepts chunks that encode a movement command for the hand according to a predefined set of actions. 
 
 Nevertheless, all modules share the same interface for requests: The module receives the arguments of the request encoded as a chunk and puts its result in the requested buffer. For instance, a request to the declarative module is stated as a partial chunk and the result is a chunk from the declarative knowledge (the fact base) that matches the chunk from the request.
 \item[Clearings] remove the chunk from a buffer.
\end{description}

The system described so far is the so-called \emph{symbolic level} of ACT-R. It is similar to standard production rule systems operating on symbols (of a certain form) and matching rules that interact with buffers and modules. However, to simulate the human mind, a notion of timing, latency, priorities etc. are needed. In ACT-R, those concepts are subsumed in the \emph{sub-symbolic level}. It augments the symbolic structure of the system by additional information to simulate the previously mentioned concepts.

Therefore, ACT-R has a progressing simulation time. Certain actions can take some time that depends on the information from the sub-symbolic level. For instance, chunks are mapped to an \emph{activation level} that determines how long it takes the declarative module to retrieve it. Activation levels also resolve conflicts between chunks that match the same request. The value of the activation level depends on the usage of the chunk in the model (inter alia): Chunks that have been retrieved recently and often have a high activation level. Hence, the activation level changes with the simulation time. This can be used to model learning and forgetting of declarative knowledge. Similarly to the activation level of chunks, production rules have a \emph{utility} that also depends on the context and the success of a production rule in prior applications. Conflicts between applicable rules are resolved by their utilities which serve as dynamic, learned rule priorities.

\subsection{Syntax}
\label{sec:syntax}

We use a simplified syntax of ACT-R that we have introduced in \cite{gall_lopstr2014}. It is based on sets of logical terms instead of the concatenation of syntactical elements. This enables an easier access to the syntactical parts. Our syntax can be transformed directly to the original ACT-R syntax and vice-versa.

The syntax of ACT-R is defined over two possibly infinite, disjoint sets of (constant) symbols $\consts$ and variable symbols $\vars$. An ACT-R model consists of a set of types $\types$ with type definitions and a set of rules $\rules$. A production rule has the form $L \Rightarrow R$ where $L$ is a finite set of buffer tests and queries. A buffer test is a first-order term of the form $\test{b}{t}{P}$ where the buffer $b \in \consts$ and $P \subseteq \consts \times (\consts \cup \vars)$ is a set of slot-value pairs $(s,v)$ where $s \in \consts$ and $v \in \consts \cup \vars$. This means that only the values in the slot-value pairs can consist of both constants and variables. The right-hand side $R \subseteq \actions$ of a rule is a finite set of actions where $\actions = \{ a(b,t,P) \enspace | \enspace a \in A, b \in \consts, t \in \consts \mbox{ and } P \subseteq \consts \times (\consts \cup \vars) \}$. I.e. an action is a term of the form $a(b,t,P)$ where the functor $a$ of the action is in $A$, the set of action symbols, the first argument $b$ is a constant (denoting a buffer), the second argument is a constant $t$ denoting a type, and the last argument is a set of slot-value pairs, i.e. a pair of a constant and a constant or variable. Usually, the action symbols are defined as $A := \{=,+,-\}$ for modifications, requests and clearings respectively. Only one action per buffer is allowed, i.e. if $a(b,t,P) \in R$ and $a'(b',t',P') \in R$, then $b \neq b'$ \cite{actr_reference}.

We define the function $\fun{vars}$ that maps an arbitrary set of terms to its set of variables in $\vars$. For a production rule $L \Rightarrow R$ the following must hold: $\fvars{R} \subseteq \fvars{L}$, i.e. no new variables must be introduced on the right-hand side of a rule. As we will see in the following sections about semantics, this restriction demands that all variables are bound on the left-hand side.

\subsection{Informal Operational Semantics}

In this section, we describe ACT-R's operational semantics informally. The production rule system constantly checks for matching rules and applies their actions to the buffers. This means that it tests the conditions on the left hand side with the contents of the buffers (which are chunks) and applies the actions on the right hand side, i.e. modifies individual slots, requests a new chunk from a module or clears a buffer.

The left hand side of a production rule consists of buffer tests -- that are terms $\test{b}{t}{P}$ with a buffer $b$, a type $t$ and a set of slot-value pairs $P$. The values of a slot-value pair can be either constants or variables. The test matches a buffer, if the chunk in the tested buffer $b$ has the specified type $t$ and all slot-value pairs in $P$ match the values of the chunk in $b$. Thereby, variables of the rule are bound to the actual values of the chunk. Values of a chunk in the buffers are always ground. This is ensured by the previously mentioned condition in the syntax of a rule that the right hand side of a rule does not introduce new variables (see section~\ref{sec:syntax}). Hence the chunks in the buffers stay ground.

If there is more than one matching rule, a conflict resolution mechanism that depends on the sub-symbolic layer chooses one rule that is applied. After a rule has been selected, it takes a certain time (usually 50\,ms) for the rule to fire. I.e. actions are applied after this delay. During that time the procedural module is blocked and no rule can match.

The right hand side consists of actions $a(b,t,P)$, where $a \in A$ is an action symbol, $b$ is a constant denoting a buffer and $P$ is again a set of slot-value pairs. We have already explained the three types of actions (modifications, requests and clearings) roughly. In more detail, a modification overwrites only the slots specified in $P$ with the values from $P$. A request clears the requested buffer and asks a module for a new chunk. It can take some time specified by the module (and often depending on sub-symbolic values) until the request is processed and the chunk is available. During that time, other rules still can fire, i.e. requests are executed in parallel. However, a module can only process one request for a buffer at the same time. Buffer clearings simply remove the chunk from a buffer. In the following, we disregard clearings in our definitions since they are easy to add.

We now give an example rule and informally explain its behavior.
\begin{example}[production rule]
\label{ex:production_rule}
We want to model the counting process of a little child that has just learned how to count from one to ten. We use the natural number chunks described in section~\ref{sec:preliminaries:actr:overview} as declarative knowledge. Furthermore, we have a goal chunk of another type $g$ that memorizes the current number in a \emph{current} slot. We now define a production rule, that increments the number in the counting process (and call this rule \emph{inc}). We denote variables with capital letters in our examples. The left-hand side of the rule \emph{inc} consists of two tests:
\begin{itemize}
 \item $\test{\var{goal}}{g}{\{ (\var{current}, X) \}}$ and
 \item $\test{\var{retrieval}}{\var{succ}}{\{ (\var{number}, X), (\var{successor}, Y) \}}$.
\end{itemize}
This means that the rule tests if in the goal buffer there is a chunk of type $g$ that has some number $X$ (which is a variable) in the \emph{current} slot. If this number $X$ is also in the \emph{number} slot of the chunk in the retrieval buffer, the test succeeds and the variable $Y$ is bound to the value in the \emph{successor} slot. The actions of the rule are:
\begin{itemize}
 \item $\actionmod{\var{goal}}{g}{\{ (\var{current}, Y) \}}$ and
 \item $\actionreq{\var{retrieval}}{\var{succ}}{\{ (\var{number}, Y) \}}$.
\end{itemize}
The first action modifies the chunk in the goal. A modification cannot change the type, that is why we just add an anonymous variable denoted by the underscore symbol in the type specification. The \emph{current} slot of the goal chunk is adjusted to the successor number $Y$ and the declarative module is asked for a chunk of type $\var{succ}$ with $Y$ in its \emph{number} slot. This is called a retrieval request. After a certain amount of time, the declarative module will put a chunk with $Y$ in its \emph{number} and $Y+1$ in its \emph{successor} slot into the retrieval buffer and the rule can be applied again.
\end{example}

\section{Very Abstract Operational Semantics} 
\label{sec:very_abstract_semantics}

Our goal is to define an operational semantics that captures as many ACT-R implementations as possible and leaves room for extensions and modifications.  Hence, we first give them a common theoretical foundation that is based on the formalization according to \cite{albrecht_2014a}
-- the very abstract operational semantics. It describes the fundamental concepts of a production rule system that operates on buffers and chunks like ACT-R. This work extends the definition from \cite{albrecht_2014a}. We compare our work with the work from \cite{albrecht_2014a} in section~\ref{sec:related_work:albrecht}. Later on, in section~\ref{sec:abstract_semantics_variant}, we define an instance of this very abstract semantics that is suitable for analysis of cognitive models.

An \emph{ACT-R architecture} is a concrete instantiation of the very abstract semantics and defines general parts of the system that are left open by the very abstract semantics like the set of possible actions $A$, the effect of such an action or the selection process. In contrast to that, an \emph{ACT-R model} defines model-specific instantiations of parts like the set of types $\types$ and the set of rules $\rules$. Table~\ref{tab:very_abstract_semantics:arch_vs_mod} summarizes what is defined by the architecture and the model.

\subsection{Chunk Stores}

As described before in section~\ref{sec:preliminaries:actr}, ACT-R operates on a network of typed chunks that we call a \emph{chunk store}. Therefore, we first define the notion of types:
\begin{definition}[chunk types]
A \emph{typing function} $\tau : \types \rightarrow 2^\consts$ maps each type from the set $\types \subseteq \consts$ to a finite set of allowed slot names. Every type set $\types$ must contain a special type $\chunk \in \types$ with $\tau(\chunk) = \emptyset$. 
\end{definition}
A chunk store is defined over a set of types and a typing function. We abstract from chunk names as they do not add any information to the system. In fact, chunks are defined as unique, immutable entities with a type and connections to other chunks:
\begin{definition}[chunk store]
\label{def:chunk_store}
A \emph{chunk store} $\Delta$ is a multi-set of tuples $(t,\val)$ where $t \in \types$ is a chunk type and $\val : \tau(t) \rightarrow \Delta$ is a function that maps each slot of the chunk (determined by the type $t$) to another chunk. 

Every chunk store $\Delta$ must contain a chunk $\nil \in \Delta$ that is defined as $\nil := (\chunk,\emptyset)$. Each chunk store $\Delta$ has a bijective \emph{identifier function} $\id_\Delta : \Delta \rightarrow \consts$ that maps each chunk of the multi-set a unique identifier. The inverse of $\id$ is defined as follows:
\begin{equation*}
\id^{-1}_\Delta(x) := \begin{cases}
                    c    & \text{if $\id_\Delta(c) = x$,}\\
                    \nil & \text{otherwise.}\\
                   \end{cases}
\end{equation*}

For a chunk $c = (t,\val)$, the following functions are defined:
\begin{itemize}
 \item $\fun{type}(c) = t$ and
 \item $\fun{slots}(c) = \val$.
\end{itemize} 
\end{definition}
The typing function $\tau$ maps a type $t$ from the set of type names $\types$ to a set of allowed slots, hence the function $\val$ of chunk $c$ has the slots of $c$ as domain. 

Note that two chunks are only considered equivalent, if they have the same chunk identifier, type and value functions (in that case they are the indistinguishable in the multi-set and therefore treated as one and the same chunk). Hence, a chunk store can contain multiple elements with the same values that still are unique entities representing different concepts. We will see this in the following example: We model our well-known example from figure~\ref{fig:natural_numbers} as a chunk store.
\begin{example}[chunk store of natural numbers]
\label{ex:chunk_store_nat}
The chunk store from figure~\ref{fig:natural_numbers} can be modeled as follows. Note that in the examples, we write $c {::}_\Delta (t,\val)$ to denote that the specified chunk on the right hand side has identifier $c$. When it is clear from the context, we only write ${::}$ instead of ${::}_\Delta$. We also use the chunk identifiers to define the connections in the slots (the $\val$ functions). 
\begin{itemize}
 \item The set of types is $\types_{\ref{ex:chunk_store_nat}} = \{ number, succ \}$.
 \item The typing function $\tau_{\ref{ex:chunk_store_nat}} : \types \to 2^\consts$ is defined as $\tau_{\ref{ex:chunk_store_nat}}(\var{number}) = \emptyset$ and $\tau_{\ref{ex:chunk_store_nat}}(\var{succ}) = \{ \var{number}, \var{successor} \}$.
 \item We have the following chunks in our store $\Delta_{\ref{ex:chunk_store_nat}}$:
 \begin{itemize}
  \item the unique entities with identifiers $1, 2, 3$ that are defined as $(\var{number},\emptyset)$,
  \item $b :: (\var{succ},\val_b)$ with $\val_b(s) = \begin{cases}
                                                     1 & \text{if $s = \var{number}$}\\
                                                     2 & \text{if $s = \var{successor}$}
                                                    \end{cases}$
  \item $c :: (\var{succ},\val_c)$ with $\val_b(s) = \begin{cases}
                                                     2 & \text{if $s = \var{number}$}\\
                                                     3 & \text{if $s = \var{successor}$}
                                                    \end{cases}$
 \end{itemize}

\end{itemize}
\end{example}

From the definition of a chunk store $\Delta$, we can derive a graph $(\Delta,E)$ where for each slot-value pair $\val(s) = d$ of a chunk $c = (t,\val) \in \Delta$ there is an edge $(c,d) \in E$ with label $s$. In the graphical representation, we can label vortices with the chunk identifiers. We can apply this to our example~\ref{ex:chunk_store_nat} to derive the graph illustrated in figure~\ref{fig:natural_numbers} on page~\pageref{fig:natural_numbers}.

Sometimes we want to build chunk stores that only have a few chunks in them that refer to chunks in other chunk stores in their slots. We call this concept a \emph{partial chunk store}.
\begin{definition}[partial chunk store]
\label{def:partial_chunk_store}
 A \emph{partial chunk store} with reference to a chunk store $\Delta$, denoted as $\Delta^{\circledcirc}$, is a multi-set of tuples $(t,\val)$ where $t \in \types$ is a chunk type and $\val : \tau(t) \rightarrow \Delta \uplus \Delta^{\circledcirc}$ is a function that maps each slot of the chunk (determined by the type $t$) to another chunk from chunk store $\Delta \uplus \Delta^{\circledcirc}$. Every chunk in the partial chunk store $\Delta^{\circledcirc}$ has a unique identifier that is disjoint from the identifiers in $\Delta$. The function $\id_{\Delta^{\circledcirc}} : \Delta^{\circledcirc} \rightarrow \consts$ returns the chunk identifier for each chunk in $\Delta^{\circledcirc}$.

The set of all partial chunk stores that refer to a chunk store $\Delta$ is denoted as $\DDeltap{\Delta}$.
\end{definition}

\begin{example}[partial chunk store]
\label{ex:partial_chunk_store}
Let $\Delta_{\ref{ex:chunk_store_nat}}$ be the chunk store from example~\ref{ex:chunk_store_nat}. We define $\Delta_{\ref{ex:partial_chunk_store}}$ as a partial chunk store that refers to $\Delta_{\ref{ex:chunk_store_nat}}$. It contains the chunk $x ::_{\Delta_{\ref{ex:partial_chunk_store}}} (\mathit{succ},\val_x)$  with the following slots:
\begin{equation*}
\val_x := \begin{cases}
           2 & \text{if $s = \var{number}$}\\
           3 & \text{if $s = \var{successor}$}
          \end{cases}
\end{equation*}

\end{example}

We define an operation that merges two (partial) chunk stores. In an abstract way it can be considered as a special multi-set union that merges two elements of a chunk store, if they have the same chunk identifiers. However, since there are many different implementations of ACT-R, we do not want to limit our formulation to this special type of multi-set union, but define a more general operator $\chunkmerge$. For the general understanding of the paper and the proofs it is sufficient to think of it as multi-set union that maintains uniqueness of chunk identifiers.
\begin{definition}[chunk merging]
\label{def:chunkmerge}
Let $\DDelta$ be the set of all chunk stores, $\Delta \in \DDelta$ a chunk store and $\DDeltap{\Delta}$ the set of all partial chunk stores that refer to $\Delta$. Then $\chunkmerge : (\DDelta \cup \DDeltap{\Delta}) \times (\DDelta \cup \DDeltap{\Delta}) \rightarrow (\DDelta \cup \DDeltap{\Delta})$ the \emph{chunk merging operator}. In the following, $\Delta \subseteq \Delta'$ denotes that every element of $\Delta$ is an element of $\Delta'$ with preservative chunk identifiers. 

We require the following properties for $\chunkmerge$. For all chunk stores $\Delta \in \DDelta$ and all (partial) chunk stores $\Delta', \Delta'', \Delta''' \in \DDelta \cup \DDeltap{\Delta}$:
\begin{enumerate}
 \item $(\Delta' \chunkmerge \Delta'') \chunkmerge \Delta''' = \Delta' \chunkmerge (\Delta'' \chunkmerge \Delta''')$, i.e. $\chunkmerge$ is \emph{associative}.  \label{def:chunkmerge:assoc}
 \item $\emptyset$ is the \emph{neutral element}. 
 \item For all $c' \in \Delta'$ and $c'' \in \Delta''$ with $\id_{\Delta'}(c') = \id_{\Delta''}(c'')$, it must hold that $c' = c''$, i.e. both chunks have same types and value functions.
 \item For all (partial) chunk stores $X \in \DDelta \cup \DDeltap{\Delta}:$ If $X \subseteq \Delta'$ and $X \subseteq \Delta''$, then there exist  (partial) chunk stores $Y$ and $Z$ such that $\Delta' \chunkmerge \Delta'' = X \chunkmerge Y \chunkmerge Z, \Delta' = X \chunkmerge Y$ and $\Delta'' = X \chunkmerge Z$. This is a special definition of \emph{idempotency} of $\chunkmerge$ for elements with the same chunk identifiers, i.e. chunks in both stores that have the same chunk identifiers, types and value functions are absorbed in the merge product. \label{def:chunkmerge:idempot}
 
 There are the following interesting special cases:
 \begin{itemize}
  \item $X = \Delta'$ leads to $\Delta' \chunkmerge \Delta'' = \Delta' \chunkmerge Z$ with $\Delta'' = \Delta' \chunkmerge Z$, i.e. the elements of $\Delta''$ that also appear in $\Delta'$ are absorbed in the merge product.
  \item $\Delta' = \Delta''$ leads to $\Delta' \chunkmerge \Delta'' = \Delta'$ with $X = \Delta' = \Delta''$ and $Y = Z = \emptyset$.
 \end{itemize}

 \item If $c' \in \Delta'$, then  $c' \in \Delta' \chunkmerge \Delta''$ and $\id_{\Delta'}(c') = \id_{\Delta' \chunkmerge \Delta''}(c')$, i.e. $\Delta' \subseteq \Delta' \chunkmerge \Delta''$ with preservative chunk identifiers. 
 \item There is a mapping $\fun{map}_{\Delta',\Delta''} : \Delta' \uplus \Delta'' \rightarrow \Delta' \chunkmerge \Delta''$ that maps chunks from the original chunk stores to the merged chunk store. It must hold that $\fun{map}_{\Delta',\Delta''}(c) = c$ if $c \in \Delta'$, i.e. the chunks from $\Delta'$ remain untouched by the merging. \label{def:chunkmerge:subset}
\end{enumerate}
From the axioms it is clear that $(\DDelta \cup \DDeltap{\Delta}, \chunkmerge)$ is a monoid. 
\end{definition}
We now give two examples for possible implementations of $\chunkmerge$.

The simplest chunk merging operator would be the multi-set union that maintains uniqueness of chunk identifiers, i.e. if two chunks have the same identifiers, types and value functions, only one version will be kept in the merged store. The  chunk identifiers are defined as 
\begin{equation*}
\id_{\Delta \chunkmerge \Delta'}(c) := \begin{cases}
                                 \id_{\Delta}(c) & \text{if $c \in \Delta$}\\
                                 \id_{\Delta'}(c) & \text{otherwise.}
                                  \end{cases} 
\end{equation*}
      
For instance, let $\Delta := \{ c_1, c_2 \}$ and $\Delta' := \{ c_2, c_3 \}$ with $\id_\Delta(c_i) = i$ for $i = 1, 2 $ and $\id_{\Delta'}(c_i) = i$ for $i = 2, 3$. We assume that $c_i = c_i$ for $i = 1, 2, 3$, i.e. those chunks are equivalent and therefore have same types and value functions. Then $\Delta \chunkmerge \Delta' = \{ c_1, c_2, c_3 \}$. Due to the same identifiers, types and value functions of the two appearances of $c_2$, the merged store only keeps one version of $c_2$ due to the idempotency axiom.

For chunk stores whose chunk identifiers have been renamed apart, this definition of $\chunkmerge$ is equivalent to real multi-set union. Therefore, $\Delta \chunkmerge \Delta' = \Delta \uplus \Delta'$ for all chunk stores that have disjoint chunk identifiers.
                                                                                                                                                         
                                                                                                                                                                                      However, in most implementations chunks that have the same structure are merged to one chunk, i.e. if $c := (t,\val) \in \Delta_p^{(\Delta)}$ and $c' := (t,\val) \in \Delta_{p'}^{(\Delta)}$, then $c$ and $c'$ would be merged to $c$ in $\Delta_p^{(\Delta)} \chunkmerge \Delta_{p'}^{(\Delta)}$. I.e., only $c$ would be kept in the merged store. The mapping function would return $\fun{map}_{\Delta_{p}^{(\Delta)},\Delta_{p'}^{(\Delta)}}(c') = c$.

\subsection{States}

We first define the individual parts of an ACT-R state. The notion of a \emph{cognitive state} defines which chunks are currently in which buffer and therefore visible to the production system that can only match chunks in buffers.
\begin{definition}[cognitive state]
\label{def:cognitive_state}
A \emph{cognitive state} $\gamma$ is a function $\buffers \rightarrow \Delta \times \mathbb{R}^+_0$ that maps each buffer to a chunk and a delay. The set of cognitive states is denoted as $\cogstates$, whereas $\cogstatespart$ denotes the set of \emph{partial cognitive states}, i.e. cognitive states that are partial functions and do not necessarily map each buffer to a chunk. We define the following functions to access the individual parts of a cognitive state $\gamma$: If $\gamma(b) = (c,d)$ for an arbitrary buffer $b$, then
\begin{itemize}
 \item $\fun{chunk}(\gamma(b)) = c$ and
 \item $\fun{delay}(\gamma(b)) = d$.
\end{itemize}
\end{definition}
The delay decides at which point in time the chunk in the buffer is available to the production system. A delay $d > 0$ indicates that the chunk is not yet available to the production system. This implements delays of the processing of requests.  

ACT-R adds a sub-symbolic level to the symbolic concepts that have been defined so far and that distinguish it from other production rule systems. To gather information from the sub-symbolic layer, we add the concept of (sub-symbolic) additional information that is needed to calculate sub-symbolic values. This information can be altered by an abstract function as we will see in section~\ref{sec:very_abstract_semantics:opsem}. The information will be expressed as multi-sets or conjunctions of predicates from first-order logic. The additional information is also used to manage data used in ACT-R's modules.

Additionally, modules other than the procedural module hold their data in the additional information.

We now define ACT-R states as follows:
\begin{definition}[very abstract state]
\label{def:unified_state}
A \emph{very abstract state} is a tuple $\langle \Delta; 
\gamma; \addinfo; t \rangle$ where 
$\gamma$ is a cognitive state in the sense of definition~\ref{def:cognitive_state}, $\addinfo$ is a multi-set of ground, atomic first order predicates (called \emph{additional information}), $t \in \mathbb{R}^+_0$ is a time. The state space is denoted with $\states_\veryabstract$.
\end{definition}
Note that a very abstract state cannot contain variables from $\vars$, but is only compound from terms, sets and functions over constants from $\consts$: The chunk store $\Delta$ contains a type that is denoted by a constant and a valuation function that connects slot names (constants) to other elements from $\Delta$, the cognitive state $\gamma$ connects buffer names (constants) with chunks from $\Delta$ and a delay in $\mathbb{R}^+_0$, the additional information $\addinfo$ is a multi-set of ground, atomic predicates and the time is also a number. The set of allowed predicates for additional information is denoted as $\addinfos$. The additional information holds data of ACT-R's modules as well as sub-symbolic information.

We continue our running example by defining a very abstract state with one of the chunks defined in example~\ref{ex:chunk_store_nat}.
\begin{example}[ACT-R states]
\label{ex:very_abstract_state}
We want to model the counting process of a little child that has learned the sequence of the natural numbers from one to ten as declarative facts and can retrieve those facts from declarative memory.
Therefore, we add a chunk of type $g$ with a \emph{current} slot that memorizes the current number in the counting process.

The following state has a chunk of type $g$ in the goal buffer that has the current number $1$. The retrieval buffer is currently retrieving the chunk $b$ with number $1$ and successor $2$. The retrieval is finished in one second as denoted by the delay. Figure~\ref{fig:ex:very_abstract_state} illustrates the state. The formal definition is:
\begin{itemize}
 \item $\types_{\ref{ex:very_abstract_state}} = \types_{\ref{ex:chunk_store_nat}} \cup \{ g \}$ where $\types_{\ref{ex:chunk_store_nat}}$ is the set of types from example~\ref{ex:chunk_store_nat}.
 \item $\tau_{\ref{ex:very_abstract_state}}(t) = \begin{cases}
                                                \{ \var{current} \}        & \text{if $t = g$}\\
                                             \tau_{\ref{ex:chunk_store_nat}}(t) & \text{otherwise.}
                                            
                                            \end{cases}$
 \item $\Delta_{\ref{ex:very_abstract_state}} = \Delta_{\ref{ex:chunk_store_nat}} \uplus \{ (g,\fun{val}_\mathit{goal}) \}$ where $\fun{val}_\mathit{goal}(\var{current}) = 1$.
 \item $\sigma_0 = \langle \Delta_{\ref{ex:very_abstract_state}} ; \gamma_0 ; \constr{true} ; 0\rangle$
 \item $\gamma_0(\var{goal}) = ((g,\fun{val}_\mathit{goal}),0)$ 
 \item $\gamma_0(\var{retrieval}) = (b,1)$ where $b$ is defined as in example~\ref{ex:chunk_store_nat}.
\end{itemize}
\end{example}

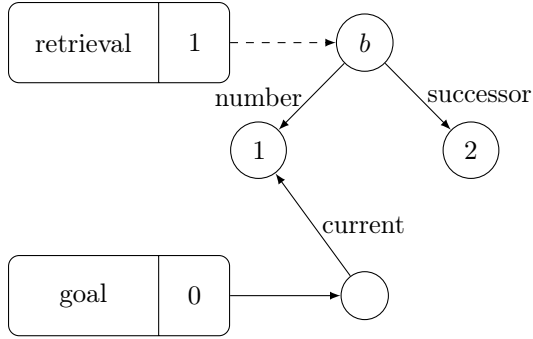
\begin{figure}[htb]
\centering

\begin{tikzpicture}[node distance=2cm,auto]
\matrix[row sep = 1cm] {

\node[buffer] (b1) {retrieval \nodepart[text width=2em]{second} 1 }; &

\node[chunk] (c) {$b$};
\node[chunk, below left of=c] (v1) {1}; 
\node[chunk, below right of=c] (v2) {2}; 

\\
\node[buffer] (b2) {goal \nodepart[text width=2em]{second} 0}; &

\node[chunk] (c2) {};

\\
};

\path[line,dashed] (b1) --  (c);
\path[line] (c) -- node[left] {number} (v1);
\path[line] (c) -- node[right] {successor} (v2);
\path[line] (b2) -- (c2);
\path[line] (c2) -- node[right] {current} (v1);
\end{tikzpicture}
\caption{Visual representation of the very abstract state defined in example~\ref{ex:very_abstract_state}. The dashed arrow signifies that the chunk in the retrieval buffer is not yet visible (as indicated by the delay right of the buffer's name).}
\label{fig:ex:very_abstract_state}
\end{figure}

\subsection{Operational Semantics}
\label{sec:very_abstract_semantics:opsem}

We now define the state transition system of the very abstract semantics. As in every production rule system, we first define how matching rules are chosen. Therefore, we introduce a selection function $S$ that is defined by the architecture and maps a state to a set of matching rules and the variable bindings implied by the application of the rule.
\begin{definition}[selection function]
Let $\Theta(\vars,\consts)$ be the set of possible substitutions over variables $\vars$ and constants $\consts$. A \emph{selection function} is a function  $S : \states_\veryabstract \rightarrow 2^{\rules \times \Theta(\vars,\consts)}$ that maps a state to a set of pairs $(r,\theta)$ where $r \in \rules$ is a production rule and $\theta \in \Theta(\vars,\consts)$ is a substitution of variables from $\vars$ with constants from $\consts$, such that all variables from the rule $r$ are substituted, i.e. $\fun{dom}(\theta) = \fun{vars}(r)$. 
\end{definition}
For actual implementations of ACT-R, the result of $S$ is usually restricted to sets with zero or one element, but for abstract definitions there can also be more than one rule. The function $S$ usually defines a notion of matching and makes sure that only rules can fire that match visible information in the buffers, i.e. chunks that are not delayed by a time greater than zero.

To define the modification of a state by a transition, we define interpretation functions of actions that determine the possible effects of an action.
\begin{definition}[interpretation of actions] 
\label{def:interpretation_action}
An \emph{interpretation of an action} is a function  $I : \actions \times \states_\veryabstract \rightarrow 2^{\DDeltap{\Delta} \times \cogstatespart \times \addinfos}$. The following conditions must hold: $(\Delta^*, \gamma^*, \addinfo^*) \in I(\alpha,\sigma)$ if
\begin{enumerate}
 \item $I(\alpha,\sigma) \neq \emptyset$, i.e. the interpretation of an action has at least one effect,
 \item the resulting chunk store is $\Delta^*$ is a partial chunk store that refers to $\Delta$ and whose chunk identifiers are disjoint from $\Delta$,
 \item the co-domain of $\gamma^*$ is $\Delta^* \times \mathbb{R}_0^+$, i.e. the cognitive state can only refer to chunks in the resulting chunk store $\Delta^*$, and
 \item if the action $\alpha$ has the buffer $b$ in its scope, i.e. $\alpha := (b,t,p)$, then the resulting partial cognitive state $\gamma^*$ has only $b$ in its domain, i.e. $\fun{dom}(\gamma^*) = \{ b \}$. 
\end{enumerate}

\end{definition}
An interpretation maps each state and action of the form $a(b,t,P)$ -- where $a \in A$ is an action symbol, $b \in \consts$ a constant denoting a buffer, $t \in \consts$ a type, and $P \subseteq \consts \times (\consts \cup \vars)$ is a set of slot-value pairs -- to a tuple $(\Delta^*,\gamma^*,\addinfo^*)$. Thereby, $\Delta^*$ is a partial chunk store that refers to $\Delta$, $\gamma^*$ is a partial cognitive state, i.e. a partial function that assigns only the buffer $b$ from the action to a chunk and a delay. The partial cognitive state $\gamma^*$ will be taken in the operational semantics to overwrite the changed buffer contents, i.e. it contains the new contents of the changed buffers. Analogously, the additional information $\addinfo^*$ defines the additions to the sub-symbolic level induced by the action.

Note that the interpretation of an action can return more than one possible effect. This is used in the abstract semantics where due to the lack of sub-symbolic information all possible effects have to be considered. For example, the declarative module can find more than one chunk matching the retrieval request. Usually, by comparing activation levels of chunks, one chunk will be returned. However, in the abstract semantics all matching chunks are possible. In the refined semantics, we restrict the selection to one possible effect as proposed by the ACT-R reference manual \cite{actr_reference}.

\begin{example}[interpretation of an action]
\label{ex:neutral_effect}
In this example we define an \emph{neutral effect}. We will see later that if this effect is applied to a state, the state does not change modulo time.

Let $\alpha := a(b,t,p)$ be our action that produces the neutral effect and $\sigma := \langle \Delta ; \gamma ; \addinfo ; t \rangle$ an ACT-R state. We define
\begin{equation*}
I(\alpha,\sigma) := \{ (\{ c \},\gamma',\constr{true}) \}
\end{equation*}
where $\gamma'(b) := (c,d)$ for $c := (t,\val)$ if $\gamma(b) = (t,\val)$ and some $d \in \mathbb{R}_0^+$ and undefined for all other inputs.

Intuitively, the action returns a chunk store with only one chunk $c$ that has the same type and value as the chunk $\gamma(b)$ that has been in buffer $b$ in the state $\sigma$. The resulting partial cognitive state just links $b$ to this new chunk $c$. No additional information ($\constr{true}$) is added.

This is a valid interpretation, since $I(\alpha,\sigma) \neq \emptyset$, $\Delta^*$ is a valid partial chunk store that refers to $\Delta$. Therefore $\val : \tau(t) \rightarrow \Delta$ is a valid value function. The co-domain of $\gamma'$ is $\Delta^* \times \mathbb{R_0^+}$ since the co-domain of $\gamma$ is $\Delta \times \mathbb{R}_0^+$ and by definition~\ref{def:chunkmerge}, $\Delta \subseteq \Delta \chunkmerge \Delta'$. Additionally, $\fun{dom}(\gamma') = \{ b \}$.
\end{example}

To combine interpretations of all actions of a rule, we first define how two interpretations can be combined. Therefore, we introduce the following set operator, that combines two sets of sets:
\begin{definition}[combination operator $\comb$ for effects]
Let $e, f$ be effects of two actions, i.e. results of an interpretation function of an action in a state $\sigma$ with chunk store $\Delta$, i.e. $e \in I(\alpha',\sigma)$ and $f \in I(\alpha'',\sigma)$. Let $e := (\Delta',\gamma',\addinfo')$ and $f := (\Delta'',\gamma'',\addinfo'')$ where $\Delta', \Delta''$ are partial chunk stores with disjoint chunk identifiers that refer to $\Delta$, $\gamma' : \buffers' \rightarrow \Delta \chunkmerge \Delta' \times \mathbb{R}_0^+, \gamma'' : \buffers'' \rightarrow \Delta \chunkmerge \Delta'' \times \mathbb{R}_0^+$ are partial cognitive states with disjoint domains, i.e. $\buffers', \buffers'' \subseteq \buffers$ and $\buffers' \cap \buffers'' = \emptyset$ and $\addinfo$ is a conjunction of first-order predicates. Then the \emph{combination} of the effects $e$ and $f$ w.r.t. a chunk store $\Delta$ is defined as
\begin{equation*}
e \comb f := (\Delta' \chunkmerge \Delta'',\gamma, \addinfo' \land \addinfo'')
\end{equation*}
where $\gamma : \buffers' \cup \buffers'' \rightarrow (\Delta' \chunkmerge \Delta'') \times \mathbb{R}_0^+$ with 
\begin{equation*}
 \gamma(b) := \begin{cases}
               \fun{map}_{\Delta',\Delta''}(\gamma'(b)) & \text{if $b \in \fun{dom}(\gamma')$,}\\
               \fun{map}_{\Delta',\Delta''}(\gamma''(b)) & \text{if $b \in \fun{dom}(\gamma'')$.}
              \end{cases}
\end{equation*}
\end{definition}
The intuition behind this definition is that two effects of an action, i.e. two triples of chunk store, cognitive state and additional information, are merged to one effect that combines them. Hence, we get a merged partial cognitive state that has the combined buffer-chunk mappings of the two original cognitive states. This is possible, since the domains of the partial cognitive states are required to be disjoint. 

The partial chunk stores are merged to one partial chunk store referring to the same total chunk store. Note that from the definition of $\chunkmerge$, the chunk store of the first effect is a subset of the merged store. However, the second chunk store might have lost some members. The mapping function assigns every chunk from the second store one from the merged store.

The merged additional information is a conjunction of the additional information from both effects.

The combination is well-defined: $\Delta' \chunkmerge \Delta''$ exists since $\Delta'$ and $\Delta''$ are partial chunk stores that refer to the same chunk store $\Delta$. Additionally, they have disjoint identifiers because they are results of an interpretation function (see definition~\ref{def:interpretation_action}) and therefore their merging cannot fail due to different chunks with same identifiers.  

The cognitive state is valid, since it has the combined domain of $\gamma'$ and $\gamma''$ and just maps the chunks from those original partial cognitive states to their versions in the merged chunk store by $\fun{map}$, i.e. the co-domain of $\gamma$ is just the merge product of the co-domains of $\gamma'$ and $\gamma''$ and keeps their connections of buffers to chunks. 


The definition of combinations of effects can be lifted to sets of effects by the following definition.
Let $E$ and $F$ be two sets of effects in some state $\sigma$ with chunk store $\Delta$, then their combination is defined as all possible pairwise combination of their elements:
\begin{equation*}
E \comb F := \{ e \comb f ~|~ e \in E \land f \in F \}.
\end{equation*}
Since the interpretation of an action is possibly non-deterministic, i.e. might have more than one effect triple, the combination of such sets of effect triples is a set that combines each effect from the first set with each effect from the second set. This leads to a set of combined effects from which the transition system will be able to choose one non-deterministically. However, since every effect set is required to have at least one effect, it the same applies for their combination.


We now define the interpretation function $I : \rules \to 2^{\DDelta \times \cogstatespart \times \addinfos}$ that maps a rule to all its possible effects (i.e. chunk store, cognitive state and additional information).
\begin{definition}[interpretation of rules]
\label{def:interpretation_rule}
In a state $\sigma := \langle \Delta ; \gamma ; \addinfo ; t \rangle$, a rule $r := L \Rightarrow R$ is interpreted by an interpretation function $I : \rules \times \states_\veryabstract \to 2^{\DDeltap{\Delta} \times \cogstatespart \times \addinfos}$ that is defined as follows: $I(r,\sigma)$ applies the function $\applyrule_r$ to all tuples in the result set when combining the individual actions of the rule:
\begin{itemize}
 \item $\applyrule : \DDeltap{\Delta} \times \cogstatespart \times \addinfos \to \DDeltap{\Delta} \times \cogstatespart \times \addinfos$ is a function that applies some more effects at the end of the rule application and is defined by the architecture.
 \item For all actions $\alpha \in R$, the resulting chunk stores are disjoint (i.e. chunk identifiers are renamed apart).
 \item The interpretation has the result 
 \begin{equation*}
  I(r,\sigma) = \applyrule_r(\bigcomb_{\alpha \in R} I(\alpha,\sigma))
 \end{equation*}
 where the combination operator $\comb$ refers to $\Delta$ and the $\applyrule$ function is applied to each member of the combination set.
 Hence, all possible effects of the rules are combined and each of the resulting partial cognitive states is then modified by the $\applyrule$ function that is defined by the architecture. 
\end{itemize}
\end{definition}
The $\applyrule$ function can apply additional changes to the state that are not directly defined by its actions. For instance, it can change some sub-symbolic values that depend on the rule application like the utility of the rule itself. Note that by definition of the ACT-R syntax it is ensured that each of the $\gamma_\mathrm{part}$ in the combination of the individual actions is still a function, since only one action per buffer is allowed as defined in section~\ref{sec:syntax}. 

It is important to mention, that there are two types of non-determinism in the interpretation of a rule:
\begin{enumerate}
 \item The first non-determinism comes from the non-deterministic nature of interpretations of an action. Each action can lead to different results (depending on their definition). This is why all interpretation functions have power sets of effects as co-domain.
 \item The second type of non-determinism comes from the definition of the combination operator that merges chunk stores by using the chunk merging operator $\chunkmerge$. Since $\chunkmerge$ is not required to be commutative, the result of the merged chunk stores may vary. This leads to possibly differing chunk identifiers. It is possible to abstract from this kind of non-determinism by introducing the concept of (graph) isomorphism on chunk stores.
\end{enumerate}

We now define the operational semantics as the state transition system $(\states_\veryabstract,\utrans)$:
\begin{definition}[very abstract operational semantics]
\label{def:unified_operational_semantics}
The transition relation $\utrans : \states_\veryabstract \times \states_\veryabstract$ in the \emph{very abstract operational semantics} of ACT-R is defined as follows:
\begin{description}
 \item[Apply]
 For a rule $r$ the following transitions are possible:
 \begin{equation*}
  \frac{ 
    (r,\theta) \in S(\sigma), (\Delta^*,\gamma^*,\addinfo^*) \in I(r\theta,\sigma)
  }{
    \sigma := \langle \Delta ; \gamma; \addinfo; t \rangle \utrans^r_\mathbf{apply} \langle \Delta \chunkmerge \Delta^* ; \gamma'; \addinfo \land \addinfo^*; t' \rangle
  }
 \end{equation*}
 where 
 \begin{itemize}
  \item $\gamma' : \buffers \rightarrow \Delta \chunkmerge \Delta^*$,\\
   $\gamma'(b) := \begin{cases}
                       (\fun{map}_{\Delta,\Delta^*}(c),d) & \text{if } \gamma^*(b) = (c,d) \text{ is defined}\\
                       (\fun{map}_{\Delta,\Delta^*}(c),d \ominus \delayrule)          & \text{otherwise, if $\gamma(b) = (c,d)$,}
                      \end{cases}
$
  \item $x \ominus y := \begin{cases}
                    x - y & \text{if $x > y$}\\
                    0     & \text{otherwise}
                   \end{cases}
$ for two numbers $x, y \in \mathbb{R}_0^+$,
 and 
  \item $t' = t + \delayrule$ for a delay $\delayrule \in \mathbb{R}^+_0$ defined by the concrete instantiation of ACT-R.
 \end{itemize}
 When applying the rule, the resulting partial chunk store $\Delta^*$ is merged with the chunk store $\Delta$ from the state. Hence, $\Delta \chunkmerge \Delta^* = \Delta \chunkmerge \Delta^*_{\alpha_1} \chunkmerge_\Delta \dots  \chunkmerge_\Delta \Delta^*_{\alpha_n}$ for all actions $\alpha_i$ on the right-hand  side of the rule. Note that $\Delta \subseteq \Delta \chunkmerge \Delta^*$, i.e. all chunks in $\Delta$ also appear in the merged chunk store with preservative chunk identifiers by definition~\ref{def:chunkmerge} of the chunk merging.
 
 The partial cognitive state that comes from the interpretation of the rule replaces all positions in the original cognitive state where it is defined, otherwise the original cognitive state remains untouched. Note that for all buffers $b \in \buffers$ and $b \notin \fun{dom}(\gamma^*)$ with $\gamma(b) = (c,d)$ we can also write $\gamma'(b) := (c,d \ominus \delayrule)$ instead of $\gamma'(b) := (\fun{map}_{\Delta,\Delta^*}(c),d \ominus \delayrule)$ since the chunk merging guarantees that $\Delta \subseteq \Delta \chunkmerge \Delta^*$ with preservative identifiers as mentioned before.
 
 The delays are taken from the partial cognitive state $\gamma^*$ or are reduced by a constant amount that models progression of time.

 When it is clear from the context, we just use $\utrans^r$ to denote that the transition applies rule $r$.
 
 \item[No Rule]
 \begin{equation*}
  \frac{ 
    C(\sigma)
  }{ 
    \sigma := \langle \Delta ; \gamma ; \addinfo ; t \rangle \utrans_\mathbf{no} \langle \Delta ; \gamma'; \addinfo; \newtime(\sigma) \rangle
  }
 \end{equation*}
 where
 \begin{itemize}
  \item $C \subseteq \states_\veryabstract$ is a side condition in form of a logical predicate,
  \item $\applynorule : \states_\veryabstract \to \cogstatespart$ a function that describes how the cognitive state should be transformed,
  \item $\newtime : \states_\veryabstract \to \mathbb{R}^+_0$ a function that describes the time adjustment in dependency of the current state, and
  \item $\gamma'(b) := \begin{cases}
                        \applynorule(\sigma)(b) & \text{if defined}\\
                        \gamma(b)                       & \text{otherwise}
                      \end{cases}$ is the updated state.

 \end{itemize}
\end{description}
\end{definition}

An \emph{apply} transition applies a rule that satisfies the conditions of the selection function $S$ by overwriting the cognitive state $\gamma$ with the result from the interpretations of the actions of rule $r$. Thereby, one possible combination of all effects of the actions is considered. Note that the transition is also possible for all other combinations. Only the buffers with a new chunk are overwritten, the others keep their contents. The same applies for parameters: They keep their value except for those where $\addinfo^*$ defines a new value. Additionally, the rule application can take a certain time $\delayrule$ that is defined by the architecture. Time is forwarded by $\delayrule$, i.e. the time in the state is incremented by $\delayrule$ and the delays in the cognitive state that determine when a chunk becomes visible to the system are decremented by $\delayrule$ (with a minimal delay of 0).

The \emph{no rule} transition defines what happens if there is no rule applicable, but there are still effects of e.g. requests that can be applied. This means that there are buffers $b \in \buffers$ with $\gamma(b) = (c,d)$ and $d > 0$, i.e. information that is not visible to the production rule system. In that case there are no possible transitions in the original semantics. We generalized this case in our definition of the \emph{no rule} transition that allows state transitions without rule applications. It ensures that if a side condition $C(\sigma)$ defined by the ACT-R instantiation, the cognitive state is updated according to the function $\applynorule$ and the current time of the system is set to a specified time $\newtime(\sigma)$. Both functions are also defined by the concrete architecture. This makes new information visible to the production system and hence new rules might fire. In typical ACT-R implementations, the side condition $C(\sigma)$ is that  $S(\sigma) = \emptyset$, i.e. that no rule is applicable, and $\newtime(\sigma) := t + d^*$ where $\sigma$ has the time component $t$ and $d^*$ is the minimum delay in the cognitive state of $\sigma$. This means that time is forwarded to the minimal delay in the cognitive state and makes for instance pending requests visible to the production rule system. It can be interpreted like if the production rule system waits with the next rule application until there is new information present that leads to a rule matching the state. This behavior coincides with the specification from the ACT-R reference implementation \cite{actr_reference}. If no transition is applicable in a state $\sigma$, i.e. there is no matching rule and no invisible information in $\sigma$, then $\sigma$ is a final state and the computation stops.

The definition of our very abstract semantics leaves parts to be defined by the actual architecture and the model. Table~\ref{tab:very_abstract_semantics:arch_vs_mod} summarizes what has to be defined by an architecture and a model.

\begin{table}
\caption{Parameters of the very abstract semantics that must be defined by the architecture or the cognitive model respectively.}
\centering
\begin{tabular}{l m{0.45\linewidth} ll}
\toprule
\multicolumn{2}{c}{Architecture} & \multicolumn{2}{c}{Model} \\
\midrule
$\consts$  & set of constants    & $\types$ & set of types\\
$\vars$ & set of variables       & $\tau$ & typing function\\
$\buffers$ & set of buffers      & $\rules$ & set of rules \\ 
$A$ & set of action symbols      & $\sigma_S$ & start state\\
$\delayrule$ & rule delay  \\
$\addinfos$ & allowed additional information\\
$\chunkmerge$ & chunk merging operator\\
$S$ & rule selection function  \\
$I$ & interpretation functions  \\
$\applynorule$ & transformation of cognitive state after no rule transition \\
$\newtime$ & progress of time after no rule transition    \\
\bottomrule
\end{tabular}
\label{tab:very_abstract_semantics:arch_vs_mod}
\end{table}

We now show that the neutral effect from example~\ref{ex:neutral_effect} is a neutral element of the rule application except for the time component.
\begin{example}[neutral element of rule applications]
 Let $\sigma := \langle \Delta ; \gamma ; \addinfo ; t \rangle$ be an ACT-R state and $r := L \Rightarrow \{ \alpha \}$ an ACT-R rule with only one action $\alpha := a(b^*,t^*,P^*)$. Let $\gamma(b^*) := (t,\val)$. 
 
 We define $e := I(\alpha,\sigma) := \{ (\{ c^* \},\gamma^*,\constr{true}) \}$ to be the only effect of $\alpha$ where $c^* := (t,\val)$ and $\gamma^*(b^*) := (c^*,d^*)$ for some $d^* \in \mathbb{R}_0^+$ and undefined for all other inputs. Then
\begin{equation*}
I(r,\sigma) := \{ (\{ c^* \},\gamma^*, \constr{true}) \}
\end{equation*}
where $\gamma^* : \buffers' \cup \buffers'' \rightarrow \{ c^* \} \times \mathbb{R}_0^+$ with 

Let $\sigma \utrans^r \sigma'$. Then $\sigma' := \langle \Delta \chunkmerge \{ c^* \} ; \gamma' ; \addinfo \land \constr{true} ; t'\rangle.$ The follow-up cognitive state $\gamma'$ is
\begin{equation*}
  \gamma'(b) := \begin{cases}
                       (\fun{map}_{\Delta,\{ c^* \}}(c),d) & \text{if } \gamma^*(b) = (c,d) \text{ is defined}\\
                       (\fun{map}_{\Delta,\{ c^* \}}(c),d \ominus \delayrule)          & \text{otherwise, if $\gamma(b) = (c,d)$.}
                      \end{cases}
\end{equation*}
This can be reduced to
\begin{equation*}
  \gamma'(b) :=  \begin{cases}
                       (c^*,d) & \text{if }  b = b^*\\
                       (c,d \ominus \delayrule)          & \text{otherwise, if $\gamma(b) = (c,d)$,}
                      \end{cases}
\end{equation*}
Since $c^*$ has the same type and value function,  $\gamma'(b)$ can be considered equivalent to $\gamma(b)$ modulo delays for all $b \in \buffers$. Hence, $\sigma$ is equivalent to $\sigma'$ modulo delays and the time component.
\end{example}

\section{Abstract Semantics as Instance of the Very Abstract Semantics}
\label{sec:abstract_semantics_variant}
%

The \emph{abstract semantics} is defined as an instance of the very abstract semantics. It is suitable for the analysis of procedural core of cognitive models because it abstracts from timings and conflict resolution by leaving parts of the transition system to non-deterministic choices and still giving room for extensions and modifications. The idea is to define the minimal core of all implementations of ACT-R's procedural system disregarding parameter choices, timings, sub-symbolic information and module configuration. The abstract semantics captures all possible state transitions the procedural system can make.

\subsection{Definition of the Abstract Semantics}
\label{sec:abstract_semantics_variant:definition}

Since it is a central part of the procedural system of ACT-R, we first define the notion of matchings:
\begin{definition}[matching]
\label{def:matching}
A buffer test $\beta := \test{b}{\var{ct}}{P}$ for a buffer $b \in \buffers$ testing for a type $\var{t} \in \types$ and slot-value pairs $P \subseteq \consts \times (\consts \cup \vars)$ \emph{matches} a state $\sigma := \langle \Delta ; \gamma ; \addinfo ; t \rangle$, written $\beta \matches \sigma$, if and only if there is a substitution $\theta$ such that $\gamma(b) = ((t,\fun{val}), 0)$ and for all $(s,v) \in P : \fun{id}_\Delta(\fun{val}(s)) = v\theta$. We define $\fun{Bindings}(\beta,\sigma) := \theta$ as the function that returns the smallest substitution that satisfies the matching $\beta \matches \sigma$.

This definition can be extended to rules: A rule $r := L \Rightarrow R$ matches a state $\sigma$, written as $r \matches \sigma$, if and only if for all buffer tests $t \in L$ match $\sigma$. The function $\fun{Bindings}(r,\sigma)$ returns the smallest substitution that satisfies the matching $r \matches \sigma$.
\end{definition}
This means that a buffer test matches a state, if the tested buffer contains a chunk of the tested type and all slot tests hold in the state, i.e. the variables in the test can be substituted by values consistently such that they match the values from the state. The values in the test are denoted by the identifiers of the chunks. Note that a test can only match chunks in the cognitive state that are visible to the system, i.e. whose delay is zero. A test cannot match chunks with a delay greater than zero.

We give the architectural parameters that are left open in the very abstract semantics:
\begin{description}
 \item[States] 
 We set the time in every state to $t := 0$ (or any other constant) because abstract states are not timed. Hence, each abstract state is a tuple $\langle \Delta ; \gamma ; \addinfo ; 0 \rangle$ where $\gamma \in \cogstates$ is a cognitive state. We sometimes project an abstract state $\langle \Delta ; \gamma ; \addinfo ; 0 \rangle$ to $\langle \Delta ; \gamma ; \addinfo \rangle$ for the sake of brevity.
 \item[Selection Function] 
 The rule selection in the abstract semantics is simply defined as $S_{\mathit{abs}}(\sigma) := \{ (r, \fun{Bindings}(r,\sigma)) \enspace | \enspace r \in \rules \wedge r \matches \sigma \}$. Hence we select all matching rules in state $\sigma$ and replace the variables from the rules by their actual values from the matching, since in the state transition system the substitution $\theta$ is applied to the rule when calculating the effects. 
 \item[Effects] For a state $\sigma = \langle \Delta ; \gamma ; \addinfo ; 0 \rangle$ the interpretation function $I_{\mathit{abs}}$ for actions $A := \{ \mathtt{=}, \mathtt{+} \}$ in the abstract semantics is defined as follows:
 \begin{itemize}
  \item  $I_{\mathit{abs}}(=(b,t,P),\sigma) = \{(\Delta^*,\gamma^*, \constr{true})\}$ for modifications\\
  where
  \begin{itemize}
   \item $c := (t,\val_b)$,
   \item $\Delta^* := \{ c \}$ with $\id_{\Delta^*}(c) = i$ for a fresh id $i \in \consts$,
   \item $\gamma^*(b) := (c,0)$, and
   \item the new slot values are: 
   \begin{equation*}
   \val_b(s) := \begin{cases}
    \fun{id}^{-1}_\Delta(v)  & \text{if $(s,v) \in P$}\\
    \val_\gamma(s) & \text{otherwise.}
  \end{cases}
  \end{equation*}
  for  $\gamma(b) = ((t,\val_\gamma),d)$.
  
  This means that a modification creates a new chunk that modifies only the slots specified by $P$ and takes the remaining values from the chunk that has been in the buffer. Note that the type cannot be modified, since the resulting chunk always has the type derived from the chunk that has previously been in the buffer. Modifications are deterministic, i.e. that there is only one possible effect. 
  
  The slot-value function $\val_b$ is well-defined, since it appears in a partial chunk store that references $\Delta$, it has $\Delta$ as co-domain by definition~\ref{def:partial_chunk_store} of a partial chunk store. 
  
  If the action contains a slot-value pair $(s,v)$ that modifies $s$ to a chunk that was not existent in the original state $\sigma$, this chunk is not magically constructed, since we do not know its type or values. Instead, we map this slot to $\nil$. This comes from the definition of $\fun{id}^{-1}_\Delta$, which is $\nil$ for $v$, since the chunk referenced by $v$ does not exist in $\Delta$.  
  \end{itemize}
  \item $(\Delta^*,\gamma^*, \addinfo^*) \in I_{\mathit{abs}}(+(b,t,P),\sigma)$ for requests\\
  if
  \begin{itemize}
   \item $\fun{request}_b : \types \times 2^{\consts \times (\consts \cup \vars)} \times \addinfos \to 2^{\Delta \times \mathbb{R}^+_0 \times \addinfos}$ is a function defined by the architecture for each buffer. It calculates the set of possible answers for a request that is specified by a type and a set of slot value pairs. Possible answers are tuples $(c,d,\addinfo)$ of a chunk $c$, delay $d$ and parameter valuation  function $\addinfo$.
   \item For all $(c^*,d^*,\addinfo^*) \in \fun{request}_b(t,P,\addinfo)$ we set $\gamma^*(b) := \begin{cases}
                         (c^*,1) & \text{if $d^* > 0$}\\
                         (c^*,0) & \text{otherwise,}
                         \end{cases}$\\
        and $\Delta^* := \{ c^* \}$.
  \end{itemize}
  Note that the additional information in the result of the request is directly added to the result of the interpretation function to update the internal state of the requested module.
  \item The function $\applyrule$ from definition~\ref{def:interpretation_rule} that adds additional changes to the state when a rule is applied is defined as the identity function, i.e. no changes to the state are introduced by the rule application itself but only by its actions.
 \end{itemize}
 \item[Rule Application Delay] 
 The delay of a rule application is set to $\delayrule := 0$, since the abstract semantics does not care about timings. 
 
 \item[No Rule Transition] 
 In the \emph{no rule} transition, there are three parameters to be defined by the actual ACT-R instantiation: The side condition $C$, the state update function $\applynorule$ and the time adjustment function $\newtime$. We define them for a state $\sigma := \langle \gamma ; \addinfo ; t \rangle$ as follows:
 \begin{itemize}
  \item $\sigma \in C$ if and only if there is a $b^* \in \buffers$ such that $\gamma(b^*) = (c,d)$ with $d > 0$, i.e. there is a buffer with a chunk that is not visible to the system. Those are the cases where there is a pending request. This means that the \emph{no rule} transition is possible as soon as there is at least one pending request. We call the buffer of one such request $b^*$.
  \item $\left[\applynorule(\sigma)\right](b^*) := (c,0)$ if $\gamma(b^*) = (c,d)$ and $d > 0$ for one $b^* \in \buffers$. This means that one pending request is chosen to be applied (the one appearing in $C$). Since this is a rule scheme and $b^*$ can be chosen arbitrarily, the transition is possible for all assignments of $b^*$. This coincides with the original definition of our abstract semantics where one request is chosen from the set of pending requests.
  \item The function $\newtime$ that determines how the time is adjusted after a chunk has been made visible is defined as $\newtime(\sigma) := t$, i.e. the time is not adjusted.
 \end{itemize}
\end{description}

Table~\ref{tab:abstract_semantics:arch_vs_mod} shows the parameters of the abstract semantics that have to be defined by the architecture.

\begin{table}[hbt]
\caption{Parameters of the abstract semantics that must be defined by the architecture.}
\centering
\begin{tabular}{ll}
\toprule
\multicolumn{2}{c}{Architecture} \\
\midrule
$\consts$ & set of constants\\
$\vars$ & set of variables \\
$\buffers$ & set of buffers  \\
$A$ & set of action symbols \\
$\delayrule$ & rule delay  \\
$\addinfos$ & allowed additional information\\
$\chunkmerge$ & chunk merging operator\\
$\fun{request_b}$ & result of requests\\
\bottomrule
\end{tabular}
\label{tab:abstract_semantics:arch_vs_mod}
\end{table}

We summarize the transition scheme of the abstract semantics:
\begin{description}
 \item[Rule transition]
 
  \begin{equation*}
  \frac{ 
    r \matches \sigma \land \theta = \fun{Bindings}(r,\sigma) \land (\Delta^*,\gamma^*,\addinfo^*) \in I(r\theta,\sigma)
  }{
    \sigma := \langle \Delta ; \gamma; \addinfo\rangle \utrans^r_\mathbf{apply} \langle \Delta \chunkmerge \Delta^* ; \gamma'; \addinfo \land \addinfo^* \rangle
  }
 \end{equation*}
 where $\gamma' : \buffers \rightarrow \Delta \chunkmerge \Delta^*$,\\
   $\gamma'(b) := \begin{cases}
                       (\fun{map}_{\Delta,\Delta^*}(c),d) & \text{if } \gamma^*(b) = (c,d) \text{ is defined}\\
                       (\fun{map}_{\Delta,\Delta^*}(c),d)          & \text{otherwise, if $\gamma(b) = (c,d)$.}
                      \end{cases}
$

Again, since $\Delta \subseteq \Delta \chunkmerge \Delta^*$ with preservative chunk identifiers, we can also write 
\begin{equation*}
\gamma'(b) := \begin{cases}
                       (\fun{map}_{\Delta,\Delta^*}(c),d) & \text{if } \gamma^*(b) = (c,d) \text{ is defined}\\
                       (c,d)          & \text{otherwise, if $\gamma(b) = (c,d)$.}
                      \end{cases}
\end{equation*}

\item[No rule transition]
 \begin{equation*}
  \frac{ 
    \gamma(b^*) = (c^*,d^*) \land d^* > 0
  }{ 
    \sigma := \langle \Delta ; \gamma ; \addinfo ; t \rangle \utrans_\mathbf{no} \langle \Delta ; \gamma'; \addinfo \rangle
  }
 \end{equation*}
 where  $\gamma'(b) := \begin{cases}
                        (c^*,0)   & \text{if $b = b^*$}\\
                       \gamma(b)  & \text{otherwise.}
                      \end{cases}
$
\end{description}

We now extend our running example by a derivation in the abstract semantics:
\begin{example}[abstract semantics]
\label{ex:abstract_semantics}
We begin with the state $\sigma_0$ from example~\ref{ex:very_abstract_state}. It is visualized in figure~\ref{fig:ex:very_abstract_state}. Then, the following derivations are possible. 
\begin{align*}
\sigma_0 & \utrans^\mathit{no} \langle \gamma_1 ; \emptyset ; 0 \rangle\\
         & \utrans^\mathit{inc} \langle \gamma_2 ; \emptyset ; 0 \rangle =: \sigma_2
\end{align*}
where
\begin{itemize}
 \item $\gamma_1(\var{retrieval}) = (b,0)$ (and $\gamma_1(\var{goal}) = \gamma_0(\var{goal})$ as in $\sigma_0$),
 \item $\gamma_2(\var{retrieval}) = (c,1)$ and
 \item $\gamma_2(\var{goal}) = ((g,\fun{val}_{g_2},0)$ where $\fun{val}_{g_2}(\var{current}) = 2$
\end{itemize}
In $\sigma_0$ no rule is applicable, but there is a pending request whose result is not visible for the production system. Hence, we can apply the \emph{no rule} transition which makes the chunk $b$ visible. Then the rule \emph{inc} from example~\ref{ex:production_rule} is applicable. If we assume that $\fun{request}_\mathit{retrieval}(\var{succ},\{ (\var{number}, 2) \},\addinfo) = (c_\mathit{req}, 1, \emptyset)$ for all additional information $\addinfo$ where $c_\mathit{req} = (\var{succ},\{ (\var{number}, 2), (\var{successor}, 3) \})$, i.e. a chunk of type $\var{succ}$ with the number 2 in the $\var{number}$ slot and 3 in the $\var{successor}$ slot, we reach the state $\sigma_2$ that is illustrated in figure~\ref{fig:ex:abstract_semantics}. Note that in this state again the \emph{no rule} transition is possible.
\end{example}

\begin{figure}[htb]
\centering

\usetikzlibrary{shapes}

\begin{tikzpicture}[node distance=2cm,auto]
\matrix[row sep = 1cm] {
\node[buffer] (b1) {retrieval \nodepart[text width=2em]{second} 1 }; &
 \node[chunk] (c) {$c$};

\node[chunk, below left of=c] (v1) {2}; 
\node[chunk, below right of=c] (v2) {3}; 

\\
 \node[buffer] (b2) {goal \nodepart[text width=2em]{second} 0 }; &

\node[chunk] (c2) {};
\\
};

\path[line,dashed] (b1) --  (c);

\path[line] (c) -- node[left] {number} (v1);
\path[line] (c) -- node[right] {successor} (v2);
\path[line] (b2) -- (c2);
\path[line] (c2) -- node[right] {current} (v1);
\end{tikzpicture}
\caption{Visual representation of state $\sigma_2$ from example~\ref{ex:abstract_semantics}.}
\label{fig:ex:abstract_semantics}
\end{figure}
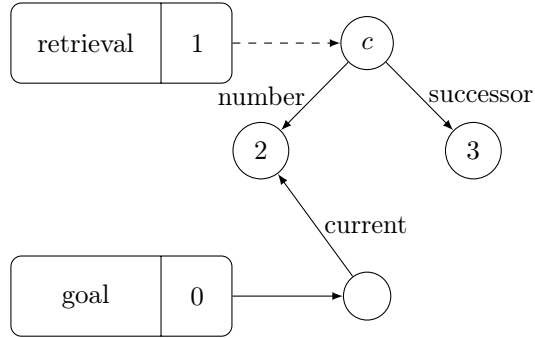

\section{Translation of ACT-R Models to Constraint Handling Rules}
\label{sec:translation}

In this section we show how to translate an ACT-R model to a Constraint Handling Rules (CHR) program. This is one of the main contributions of this paper. The translation is the first that matches the current operational semantics of ACT-R. The proof of the soundness and completeness of the translation w.r.t. the abstract operational semantics of ACT-R is a new contribution of this paper.

Therefore, we first give a quick introduction to CHR and its basic concepts we need for definitions and proofs in section~\ref{sec:constraint_handling_rules}. Then we introduce a normal form of ACT-R rules that simplifies the translation process and proofs in section~\ref{sec:set_normal_form}. In section~\ref{sec:translation_of_states} we show the translation of ACT-R states to CHR states and in section~\ref{sec:translation_of_rules} we define the translation scheme for rules. Finally, in section~\ref{sec:soundness_completeness} our translation of ACT-R models to CHR programs is proven sound and complete w.r.t. the abstract operational semantics of ACT-R.

\subsection{Constraint Handling Rules (CHR)}
\label{sec:constraint_handling_rules}

Before we define the translation scheme of ACT-R models to CHR program, we first recapitulate syntax and semantics of CHR briefly. For an extensive introduction to CHR, its semantics, analysis and applications, we refer to \cite{fru_chr_book_2009}. We use the latest definition of the state transition system of CHR that is based on state equivalence \cite{raiser_betz_fru_equivalence_revisited_chr09}. This syntax allows for more elegant proofs and has been proven to be equivalent to the canonical so-called very abstract semantics of CHR. The definitions from those canonical sources are now reproduced. We assume the reader to be familiar with the common concepts of first-order predicate logic like predicate symbols, function symbols, predicates, functions, constants, variables, terms and formulas.

The syntax of CHR is defined over a set of variables $\vars$, a set of function symbols (with arities) $\Phi$ and a set of predicate symbols with arities $\Pi$ that is disjointly composed of \emph{CHR constraint symbols} and \emph{built-in constraint symbols}. The set of constraint symbols contains at least the symbols $=/2$, $\constr{true}/0$ and $\constr{false}/0$.

For a constraint symbol $c/n \in \Pi$ and terms $t_1, \dots, t_n$ over $\vars$ and $\Phi$ for $1 \leq i \leq n$, $c(t_1,\dots,t_n)$ is called a \emph{CHR constraint}, if $c/n$ is a CHR constraint symbol or a \emph{built-in constraint} if $c$ is a built-in constraint respectively. We now define the notion of CHR states.
\begin{definition}[CHR state]
A \emph{CHR state} is a tuple $\langle \mathbb{G} ; \mathbb{B} ; \mathbb{V} \rangle$ where the \emph{goal} $\mathbb{G}$ is a multi-set of constraints, the \emph{built-in constraint store} $\mathbb{B}$ is a conjunction of built-in constraints and $\mathbb{V}$ is a set of \emph{global variables}.

All variables occurring in a state that are not global are called $\emph{local}$ and the local variables that are only used for built-in constraints are called $\emph{strictly local variables}$.
\end{definition}

CHR states can be modified by rules that together form a CHR program.
\begin{definition}[CHR program]
A \emph{CHR program} is a finite set of rules of the form 
\begin{equation*}
r \enspace @ \enspace H_k ~\backslash~ H_r \Leftrightarrow G ~|~ B_c , B_b
\end{equation*}
 where $r$ is an optional rule name, the heads $H_k$ and $H_r$ are multi-sets of CHR constraints, the guard $G$ is a conjunction of built-in constraints and the body is a multi-set of CHR constraints $B_c$ and a conjunction of built-in constraints $B_b$. Note that at most one of $H_k$ and $H_r$ can be empty. If $G$ is empty, it is interpreted as the built-in constraint $\constr{true}$. 
 
 If $H_k = \emptyset$, the rule is called a \emph{simplification rule} and we write
 \begin{equation*}
 r \enspace @ \enspace H_r \Leftrightarrow G ~|~ B_c , B_b.
 \end{equation*}
 Conversely, if $H_r = \emptyset$, the rule is called a \emph{propagation rule} and we write
 \begin{equation*}
 r \enspace @ \enspace H_k \Rightarrow G ~|~ B_c , B_b.
 \end{equation*}
\end{definition}

Informally, a rule is applicable, if the head matches constraints from the store $\mathbb{G}$ and the guard holds, i.e. is a consequence of the built-in constraints $\mathbb{B}$. In that case, the matching constraints from $H_k$ are kept in the store, the constraints matching $H_r$ are removed and the constraints from $B_c$, $B_b$ and $G$ are added.

To define the operational semantics formally, we first have to define state equivalence over CHR states following the work in \cite{raiser_betz_fru_equivalence_revisited_chr09}. Therefore, we assume a constraint theory $\mathcal{CT}$ for the interpretation of the built-in constraints in $\Pi$.
\begin{definition}[state equivalence of CHR states]
\label{def:chr_state_equiv}
 Equivalence between CHR states is the smallest equivalence relation $\equiv$ over CHR states that satisfies the following conditions:
 \begin{description}
  \item[Equality as substitution]
  
  \label{def:chr_state_equiv:substitution}
  \begin{equation*}
   \langle \mathbb{G} ; X {=} t \land \mathbb{B} ; \mathbb{V} \rangle \equiv \langle \mathbb{G}[X/t] ; X {=} t \land \mathbb{B} ; \mathbb{V} \rangle.
  \end{equation*}
  
  \item[Transformation of the constraint store] 
  
   \label{def:chr_state_equiv:equiv}
  
  If $\mathcal{CT} \models \exists \bar{s}.\mathbb{B} \leftrightarrow \exists \bar{s}'.\mathbb{B}'$ where $\bar{s}, \bar{s}'$ are the strictly local variables of $\mathbb{B}, \mathbb{B}'$, respectively, then:
  \begin{equation*}
   \langle \mathbb{G} ; \mathbb{B} ; \mathbb{V} \rangle \equiv \langle \mathbb{G} ; \mathbb{B}' ; \mathbb{V} \rangle.
  \end{equation*}

  \item[Omission of non-occurring global variables]
  
   \begin{equation*}
   \langle \mathbb{G} ; \mathbb{B} ; \{X\} \cup \mathbb{V} \rangle \equiv \langle \mathbb{G} ; \mathbb{B} ; \mathbb{V} \rangle.  
  \end{equation*}
  
  \item[Equivalence of failed states]
  
   \begin{equation*}
   \langle \mathbb{G} ; \constr{false} ; \mathbb{V} \rangle \equiv \langle \mathbb{G}' ; \constr{false} ; \mathbb{V}' \rangle.
  \end{equation*}
 \end{description}

\end{definition}

The operational semantics is now defined by the following transition scheme over equivalence classes over CHR states i.e. $[\sigma] ::= \{ \sigma' ~|~ \sigma' \equiv \sigma \}$
\begin{definition}[operational semantics of CHR]
\label{def:chr_operational_semantics}
For a CHR program the state transition system over CHR states and the rule transition relation $\mapsto$ is defined as the following transition scheme:
\begin{equation*}
\frac
 {
   r \enspace @ \enspace H_k ~\backslash~ H_r \Leftrightarrow G ~|~ B_c , B_b
 }
 {
 [\langle H_k \uplus H_r \uplus \mathbb{G} ; G \land \mathbb{B} ; \mathbb{V} \rangle ] \mapsto^r [ \langle H_k \uplus B_c \uplus \mathbb{G} ; G \land B_b \land \mathbb{B} ; \mathbb{V} \rangle ]
 }
\end{equation*}
Thereby, we assume that $r$ is a variant of a rule in the program such that its local variables are disjoint from the variables occurring in the representative of the pre-transition state. 

We may just write $\mapsto$ instead of $\mapsto^r$ if the rule $r$ is clear from the context.
\end{definition}

\subsection{Set Normal Form}
\label{sec:set_normal_form}

To simplify the translation scheme, we assume the ACT-R production rules to be in \emph{set normal form}. The idea is that each buffer test contains each slot exactly once. In the general ACT-R syntax it is possible to specify more than one slot-value pair for each slot or none at all. If we assume set-normal form, we can reduce the cases to consider in the translation and soundness and completeness proofs.

We now define the set-normal form formally and show that every ACT-R production rule can be transformed to set-normal form with same semantics.
\begin{definition}[set normal form]
An ACT-R rule $r := L \Rightarrow R$ is in \emph{set normal form}, if and only if for all buffer tests $\test{b}{t}{P} \in L$ and $s \in \tau(t)$ there is exactly one $v \in \vars \cup \consts$ such that $(s,v) \in L$. 
\end{definition}

\begin{theorem}[set normal form]
For all ACT-R models with rules $\rules$, there is a set of rules $\Sigma'$ with the following properties: For all rules $r \in \Sigma$ with $r := L \Rightarrow R$ that are applicable in at least one state there is a rule $r' \in \Sigma'$ with $r' := L' \Rightarrow R$ in set-normal form such that for all ACT-R states $\sigma, \sigma'$ it holds that $\sigma \utrans^r \sigma'$ if and only if $\sigma \utrans^{r'} \sigma'$ (in the abstract semantics).
\end{theorem}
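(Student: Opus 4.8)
The plan is to build $\Sigma'$ by rewriting each rule test-by-test and slot-by-slot into set normal form, and then to argue that this rewriting alters neither the states in which a rule is applicable nor the successors it yields. Fix $r := L \Rightarrow R \in \rules$ and a buffer test $\test{b}{t}{P} \in L$. For each slot $s \in \tau(t)$ I would repair $P$ according to the cardinality of $\{ v \mid (s,v) \in P \}$: if it is $0$ (the slot is untested) I add a pair $(s,Z)$ with $Z \in \vars$ fresh for the whole rule; if it contains two distinct variables $X,Y$ I apply the renaming $[Y/X]$ to all of $L$ and $R$, after which the two pairs collapse to $(s,X)$; if it contains a constant $c$ and a variable $X$ I apply $[X/c]$, collapsing them to $(s,c)$; and if it contains two distinct constants $c \neq c'$ then the test can never be satisfied, so $r$ fires in no state and I simply omit it from $\Sigma'$ (permitted, since the claim only quantifies over rules applicable in at least one state). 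Iterating exhausts all violations, which I would formalise by induction on a measure $\mu(L)$ summing, over every buffer test and every slot $s$ of the tested type, the quantity $\lvert n_s - 1 \rvert$ with $n_s := \lvert \{ v \mid (s,v) \in P \} \rvert$; each repair strictly decreases $\mu$ and none introduces a fresh violation, so the process terminates in $r' := L' \Rightarrow R$ in set normal form.

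The semantic heart is one lemma: each repair step preserves both the matching relation $\matches$ and the induced bindings. Following Definition~\ref{def:matching}, I would read $r \matches \sigma$ as a consistency condition: every tested buffer carries a visible chunk $\gamma(b) = ((t,\val),0)$ of the tested type, every variable of $L$ receives a single value $\id_\Delta(\val(s))$ at each occurrence $(s,v)$, and every constant value matches its slot. The fresh-variable step is immediate: $\val$ is total on $\tau(t)$ by Definition~\ref{def:chunk_store}, so $(s,Z)$ is satisfied by $Z := \id_\Delta(\val(s))$, and being fresh, $Z$ constrains nothing and occurs in no other test nor in $R$. The merging step rests on the observation that $X$ and $Y$ occur in the \emph{same} slot of the \emph{same} test, so any matching substitution for $r$ already forces $X\theta = \id_\Delta(\val(s)) = Y\theta$; hence the conjunction ``all $X$-occurrences equal, all $Y$-occurrences equal, and these two values coincide'' for $r$ is logically equivalent to ``all occurrences of $X$ in $r[Y/X]$ are equal''. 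This yields $r \matches \sigma$ iff $r[Y/X] \matches \sigma$, with $\fun{Bindings}(r,\sigma)$ and $\fun{Bindings}(r[Y/X],\sigma)$ agreeing on all surviving variables and $X$ taking the common value; the constant case is analogous with $[X/c]$.

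It then remains to transport this to transitions. Writing $\theta := \fun{Bindings}(r,\sigma)$ and $\theta' := \fun{Bindings}(r',\sigma)$, the lemma shows $r$ and $r'$ match exactly the same states; moreover the only variables on which they differ are those eliminated by a merging renaming, and each such variable is bound by $\theta$ to precisely the value its representative receives under $\theta'$, while freshly introduced variables never occur in $R$. Hence the instantiated action sets coincide as ground sets of actions, $R\theta = R[Y/X]\theta'$, and since in the abstract semantics $\applyrule$ is the identity, Definition~\ref{def:interpretation_rule} gives that $I(r\theta,\sigma)$ depends on $r$ only through these instantiated actions, so $I(r\theta,\sigma) = I(r'\theta',\sigma)$. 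As the apply transition depends on a rule only through its matching condition and this effect set, $r$ and $r'$ induce identical transitions, giving $\sigma \utrans^r \sigma'$ iff $\sigma \utrans^{r'} \sigma'$ for all $\sigma,\sigma'$. I expect the merging lemma to be the main obstacle: making the forced-equality argument airtight needs the precise consistency reading of matching, and one must check carefully that applying $[Y/X]$ to $R$ is semantically inert, so that although the statement writes $r' := L' \Rightarrow R$ with an unchanged right-hand side, the actual right-hand side is $R$ up to a unifying renaming whose effect vanishes after instantiation because the merged variables are equal in every matching state.
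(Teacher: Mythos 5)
Your proposal is correct and takes essentially the same approach as the paper's own proof: both constructions fill untested slots with fresh variables, collapse multiple slot-value pairs for one slot by a substitution justified by the observation that matching forces all values tested in the same slot of the same buffer to coincide, and discard rules that can never match (your two-distinct-constants case is exactly the paper's appeal to applicability in at least one state). Where you go beyond the paper is in the details it leaves implicit: the paper states the construction and the forced-equality ``proof idea'' but never carries the argument through to transitions, whereas you add a termination measure for the repair process, a step-preservation lemma for $\matches$ and $\fun{Bindings}$, and the closing argument that the instantiated action sets coincide, so that with $\applyrule$ the identity, $I(r\theta,\sigma) = I(r'\theta',\sigma)$ and the induced transitions agree. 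Most notably, you apply the merging renaming to $R$ as well and observe that the theorem's claim of an unchanged right-hand side holds only up to this renaming; the paper's construction, read literally, applies its substitution $\theta$ only to the left-hand side tests, which would violate the syntactic condition $\fvars{R} \subseteq \fvars{L'}$ whenever a merged variable occurs in $R$. Your handling of this point is the more careful one.
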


We now give the construction of $\Sigma'$. All rules in $\Sigma$ that are not applicable in any state do not appear in $\Sigma'$. For all remaining rules $r \in \Sigma$, we first transform $r$ to a rule $r'$ that is in set-normal form. For every buffer test $\test{b}{t}{P} \in L$ there is a test $\test{b}{t}{P'\theta} \in L'$ for a substitution $\theta$. Thereby, $P'$ has the following slot-value pairs:
\begin{itemize}
 \item For all $s \in \tau(t)$ where there is exactly one $v \in \vars \cup \consts$ such that $(s,v) \in P$, $(s,v) \in P'$.
 \item For all $s \in \tau(t)$ where there is no $v \in \vars \cup \consts$ such that $(s,v) \in P$, there is a slot-value pair $(s,V) \in P'$ for a fresh variable $V \in \vars$.
 \item For all $s \in \tau(t)$ where there is are $v_1, \dots, v_n \in \vars \cup \consts$ such that $(s,v_i) \in P$ for $1 \leq i \leq n$ we produce a substitution $\theta_b := \{ v_1/v, \dots, v_n \}$ and add a slot-value pair $(s,v) \in P'$ for a $v \in \vars \cup \consts$. There are the following cases:
 \begin{enumerate}
  \item $v_1, \dots, v_n \in \vars:$ Then $v \in \vars$ is a fresh variable. Intuitively, we just introduce a new variable and replace all other variables by this new variable.
  \item $v_1, \dots, v_n \in \consts:$ Then $v = v_1$. Intuitively, since $r$ is applicable in at least one state, $v_1 = v_2 = \dots = v_n$. The slot-value pairs are redundant and therefore we only add one of them.
  \item $v_1, \dots, v_n \in \vars \cup \consts:$ W.l.o.g., $v_1 \in \consts$. Then $v = v_1$. 
 \end{enumerate}
\end{itemize}
We define the substitution $\theta$ as the composition of all substitutions $\theta_b$. The rule $r'$ is obviously in set normal form.

The proof idea is that if $r \matches \sigma$ all variables or constants that appear in the same slot test in the same buffer have the same values. If there is a buffer test $\test{b}{t}{P}$ with two slot-value pairs $(s,v) \in P$ and $(s,v')$, then $v\theta = v'\theta$ for $\theta = \fun{Bindings}(r,\sigma)$. Hence, we can replace all occurrences of $v$ and $v'$ by $v$ and remove $(s,v')$ from $P$. This is exactly what our construction of the set-normal form does.


\subsection{Notational Aspects}
\label{sec:notational_aspects}

We use the following symbols:
\begin{itemize}
 \item Lists are denoted by enumerating their elements, e.g. $[a, b, c]$ for the list with elements $a, b$ and $c$. $[]$ denotes the empty list. We use $[H|T]$ for a list with head element $H$ and tail list $T$. We also use the following notation for list comprehension: $[ x  : x \in M \land p(x)]$ is the list with all elements $x$ from a (multi-)set $M$ that satisfy $p(x)$.
 \item The list concatenation is denoted with $++$.
 \item For a function $f : A \rightarrow B$ with finite domain $A$, we define $\llbracket f \rrbracket := [(a,b) : a \in A \land b \in B \land f(a) = b]$ sorted by an order on $A$ and $B$, i.e. the (sorted) enumerative list notation of the function $f$. It can be understood as the list representation of the relational representation of $f$ as a set of tuples $f \subseteq A \times B$. 
\end{itemize}

\subsection{Translation of States}
\label{sec:translation_of_states}

To translate an ACT-R state to CHR, we have to define translations for the individual components of such a state.
\begin{definition}[translation of chunk stores]
\label{def:translation_chunk_store}
A (partial) chunk store $\Delta$ can be translated to a first order term as follows:
\begin{equation*}
[\constr{chunk}(\id_\Delta(c),t,\llbracket \val \rrbracket) : c \in \Delta \land c = (t,\val)]
\end{equation*}
Thereby, $\llbracket f \rrbracket$ denotes the explicit relational notation of the function $f$ as a sorted list of tuples and $[x : p(x)]$ is the list comprehension as defined in section~\ref{sec:notational_aspects}.

We denote the translation of a (partial) chunk store $\Delta$ with $\fun{chr}(\Delta)$.
\end{definition}

Each chunk in a chunk store is translated to a term $\fun{chunk}/3$ that is member of a list. Note that we do not have defined the order of chunks in the list that represents the chunk store. We consider all permutations of such chunk store lists as equivalent, since for the proofs the actual choice of order will not make any difference, since we always can choose just the translation with the ``correct'' order. Where necessary, we comment on that issue in our proofs in section~\ref{sec:soundness_completeness}. The slot-value pairs in the $\fun{chunk}$ terms are sorted as defined in $\llbracket \cdot \rrbracket$.

The cognitive state $\gamma$ will be represented by $\constr{gamma}/3$ constraints that map a buffer $b \in \buffers$ to a chunk identifier and a delay. Since additional information is already represented as logical predicates, we represent them as built-in constraints in the CHR store. We can now define the translation of an abstract state.
\begin{definition}[translation of abstract states]
\label{def:translation_states}
An abstract ACT-R state $\sigma := \langle \Delta; \gamma ; \addinfo \rangle$ can be translated to the following CHR state:
\begin{align*}
\langle & \{ \constr{delta}(\fun{chr}(\Delta)) \} \\
        & \uplus \{ \constr{gamma}(b,\id_\Delta(c),d) ~|~ b \in \buffers \land \gamma(b) = (c,d) \land c = (t,\val) \} ; \addinfo ; \emptyset \rangle 
\end{align*}
We denote the translation of an ACT-R state $\sigma$ by $\fun{chr}(\sigma)$.
\end{definition}

The chunk store is represented by a $\constr{delta}$ constraint that contains the translated chunk store as defined in definition~\ref{def:translation_chunk_store}. Hence, a valid translation of an ACT-R state can only contain exactly one $\constr{delta}$ constraint.

For every buffer of the given architecture, a constraint $\constr{gamma}$ with buffer name, chunk id and delay is added to the state. Since $\gamma$ is a total function, every buffer has exactly one $\constr{gamma}$ constraint. Additionally, the chunk id in the $\constr{gamma}$ constraint must appear in exactly one $\constr{chunk}$ constraint because the co-domain of gamma refers to $\Delta$ and the chunk ids are unique.

Additional information is used directly as built-in constraints.

\subsection{Translation of Rules}
\label{sec:translation_of_rules}

In our translation scheme, ACT-R rules are translated to corresponding CHR rules. 

\subsubsection{Auxiliary Functions for Variable Names}

To manage relations between newly introduced variables, we define some auxiliary functions. The functions all produce variable names from the set of variables $\vars$ for a set of arguments that are from the set of constants $\consts$ (or a subset of it) and are applied during the translation, i.e. they do not appear in the generated CHR code.
\begin{definition}[variable functions]
Let $\vars, \consts$ be the set of variables and constants of an ACT-R architecture respectively, $\buffers \subset \consts$ the set of buffers of this architecture and $\vars_i \subset \vars$ for $i = 1, \dots, 6$ are disjoint subsets of the set of variables. Then the following auxiliary functions are defined:
\begin{description}
 \item[Chunk variable function] $\cvar : \buffers \rightarrow \vars_1, b \mapsto C^1_b$ that returns a fresh, unique variable $C^1_b$ for each buffer $b$. It identifies the chunk of a particular buffer in the translation.
 \item[Result variable functions] $\fun{resstore}, \fun{resid}, \fun{resdelay} : \buffers \rightarrow \vars_i$ for $i = 2,3,4$ are defined as $b \mapsto C^i_b$ and return a fresh, unique variable $C^i_b$ for each buffer $b$. They are needed to memorize the results of an action.
 \item[Merge variable function] $\fun{mergeid} : \buffers \times \vars_5, (b,s) \mapsto V_{b,s}$ that returns a fresh, unique variable $C^6_b$ for each buffer $b$. It is needed to memorize the new chunk identifier after merging the chunk in $b$ with the existing chunk store.
 \item[Cognitive state variable function] $\fun{cogstate} : 2^\buffers \rightarrow 2^{\buffers \times \vars_6}, \fun{cogstate}(B) \mapsto [ (b,V_b) ~:~ b \in B \land V_b \in \vars_6 ]$ (where the list is sorted by the $b$). The function returns a set of buffer-variable pairs for a set of buffers. From $\fun{cogstate}(\buffers) = \llbracket \gamma \rrbracket$ it follows that $V_b = \gamma(b)$, i.e. $\fun{cogstate}(\buffers)$ can be considered as a pattern for the cognitive state.
\end{description}
\end{definition}

\subsubsection{Built-in Constraints for Actions}

We define some built-in constraints that are needed for the translation. The idea is that they calculate the results of actions, chunk merging and chunk mapping as defined in the operational semantics of ACT-R. For actual instantiations of the abstract semantics (i.e. with a defined set of actions and chunk merging and mapping mechanisms), it has to be shown that their CHR implementations obey the properties that we define in the following.
\begin{definition}[action built-ins]
\label{def:builtin:actions}
Let $\alpha := a(b,t,P)$ be an action and $\sigma := \langle \Delta ; \gamma ; \addinfo \rangle$ an ACT-R state. Let $D := \fun{chr}(\Delta)$ be the CHR representation of the chunk store in $\sigma$ and $G := \llbracket \gamma \rrbracket$ be the enumerative list representation of the cognitive state $\gamma$. 

For all $\Delta^*, \gamma^*, \addinfo^*$: If $(\Delta^*, \gamma^*, \addinfo^*) \in I(\alpha,\sigma)$ is the result of the interpretation of $\alpha$ in $\sigma$ with $\gamma^*(b) := (c_b^*,d_b^*)$, the built-in constraint $\constr{action}/6$ is defined as follows:
\begin{align*}
\constr{action}(\alpha, D, G, D_\mathit{res}, C_\mathit{res}, E_\mathit{res})
 \land \addinfo ~\leftrightarrow~ & D_\mathit{res} = \fun{chr}(\Delta^*) \land C_\mathit{res} = \id_{\Delta^*}(c_b^*) \land E_\mathit{res} = d_b^* \\ \land & \addinfo \land \addinfo^*
\end{align*}

\end{definition}
Intuitively, the $\constr{action}$ built-in constraint represents a function that gets the action $\alpha$, CHR representations of the constraint store $\Delta$ and the cognitive state $\gamma$ as input and returns by the help of the additional information $\addinfo$ the CHR representation of $I(\alpha,\sigma)$. The constraint theory of the $\constr{action}$ constraint is well-defined, since in definition~\ref{def:interpretation_action}, $\gamma^*$ has the domain $\{ b \}$ and co-domain $\Delta^* \times \mathbb{R}_0^+$, hence $\gamma^*(b)$ is defined. Since $c_b^* \in \Delta^*$, $\id_{\Delta^*}(c_b^*)$ is defined.

\subsubsection{Built-in Constraints for Chunk Merging}

In the abstract semantics, chunk stores are merged by the operator $\chunkmerge$. In the following, built-in constraints are defined that implement $\chunkmerge$ and the corresponding mapping function $\fun{map}$.

\begin{definition}[merge built-in]
\label{def:builtin:merge}
For a set of chunk stores $\{ \Delta_1, \dots, \Delta_n \}$ with $D_i := \fun{chr}(\Delta_i)$ for $i = 1, \dots, n$, the built-in $\constr{merge}/2$ is defined as follows:
\begin{equation*}
\constr{merge}([D_1, D_2, \dots D_n], D) \leftrightarrow D = \fun{chr}(\Delta_1 \chunkmerge \Delta_2 \chunkmerge \dots \chunkmerge \Delta_n).
\end{equation*}

\end{definition}

\begin{definition}[map built-in]
\label{def:builtin:map}
For two CHR representations of chunk stores $D := \fun{chr}(\Delta)$ and $D' := \fun{chr}(\Delta')$, the built-in constraint $\constr{map}/4$ is defined as:
\begin{align*}
 \constr{map}(D,D',C,C') \leftrightarrow & C' = \id_{\Delta \chunkmerge \Delta'}(\fun{map}_{\Delta,\Delta'}(\id_{\Delta}^{-1}(C))) \text{ if $\fun{chunk}(C,T,P)$ in $D$ or} \\
  & C' = \id_{\Delta \chunkmerge \Delta'}(\fun{map}_{\Delta,\Delta'}(\id_{\Delta'}^{-1}(C))) \text{ otherwise.}
\end{align*}
The main difference to the definition of the function $\fun{map}$ from the abstract semantics is that the built-in constraint operates on chunk identifiers whereas the function operates directly on chunks. Note that in the case that the chunk with identifier $C$ appears in neither in $D$ nor in $D'$, $C'$ is bound to $\nil$ by definition of $\id$ (see definition~\ref{def:chunk_store}). 
\end{definition}

\subsubsection{List Operations}

We use the built-in constraint $\constr{in}/2$ to denote that a term is member of a list.

\begin{definition}[member of a list]
\label{def:builtin:chunk_in}
For a term $c$ and a list $l$, the constraint $c ~ \constr{in} ~ l$ holds, iff there is a term $c'$ that is member of $l$ and $c = c'$.
\end{definition}
Note that variables in $c$ are bound to the values in $l$ by this definition.

\subsubsection{Translation Scheme for Rules}

We can now define the translation scheme for rules.
\begin{definition}[translation of rules]
\label{def:translation_rules}
An ACT-R rule in set-normal form $r := L \Rightarrow R$ can be translated to a CHR rule of the following form:
\begin{align*}
r ~@~  &  \constr{delta}(D) \uplus \{ \constr{gamma}(b,\fun{cvar}(b),\fun{dvar}(b)) ~|~ b \in \buffers \} \\
       \Leftrightarrow \\
       &\bigwedge_{\test{b}{t}{P} \in L} ( \fun{chunk}(\fun{cvar}(b),t,P) ~\constr{in}~ D \land \fun{dvar}(b) {=} 0 ) ~|~ \\
        & \{ \constr{delta}(D^*)\}  \\
        \uplus~    &  \{ \constr{gamma}(b,\fun{mergeid}(b),\fun{resdelay}(b)) ~|~ a(b,t,P) \in R \} \\
        \uplus~    &  \{ \constr{gamma}(b,\fun{cvar}(b),\fun{dvar}(b)) ~|~ a(b,t,P) \notin R \} ,\\
                  & \bigwedge_{\alpha = a(b,t,P) \in R}  \constr{action}(\alpha, D, \fun{cogstate}(\buffers), \fun{resstore}(b), \fun{resid}(b), \fun{resdelay}(b)) \\
       \land~      & \constr{merge}([\fun{resstore}(b) : a(b,t,P) \in R],D') \}\\
       \land~      & \constr{merge}([D,D'],D^*) ~\uplus\\
       \land~      & \bigwedge_{a(b,t,P) \in R} \constr{map}(D,D',\fun{resid}(b),\fun{mergeid}(b)).
\end{align*}
Note that ACT-R constants and variables from $\consts$ and $\vars$ are implicitly translated to corresponding CHR variables. 

We denote the translation of a rule $r$ by $\fun{chr}(r)$ and the translation of an ACT-R model $\Sigma$ that is a set of ACT-R rules by $\fun{chr}(\rules)$. Thereby, $\fun{chr}(\rules) := \{ \fun{chr}(r) ~|~ r \in \rules \}.$
\end{definition}
The intuition behind the translation can be described as follows:

The CHR rule tests the state for a $\constr{delta}/1$ constraint representing the chunk store and $\constr{gamma}/3$ constraints that come from the buffer tests of the rule. In the $\constr{gamma}/3$ constraints a variable for the chunk identifier is introduced. In the guard, the built-in constraint $\constr{in}/2$ checks, if the chunk store represented as a list contains a term $\fun{chunk}/3$ with the same type and slot-value pairs as specified in the buffer tests. The connection to the buffer is realized by the same variable for the chunk identifier (through the variable function $\fun{cvar}$). Since the rule is in set-normal form, the buffer tests are already completed (i.e. all slots are tested) and represented as a sorted list of slot-value pairs as in the state.

In the body of the rule, the built-in constraint calculates the result of each action from the right-hand side of the ACT-R rule. The resulting chunk stores are merged to one store $D'$ by the built-in constraint $\fun{merge}$. Note that the order of merging is not specified by the translation scheme (as it is not specified by the ACT-R abstract semantics). 

The built-in constraint $\fun{map}/4$ implements the $\fun{map}$ function of the ACT-R semantics and gives access to the possibly modified chunk identifier of all elements in $D'$. By definition of $\fun{map}$, only chunk identifiers in $D'$ are modified by merging. The resulting chunk identifiers are bound to a variable specified by the variable function $\fun{mergeid}/1$. 

The resulting chunk store $\constr{delta}$ and $\constr{gamma}$ constraints for all buffers are added. If the buffers have been modified, the $\constr{gamma}$ constraints points to the resulting chunk of the action, if not it shows to the chunk that has been in the buffer before.

\subsection{No Rule Transition}

In addition to transitions by rule applications, ACT-R can also have state transitions without rule applications. This is useful for instance, if no rule is applicable (i.e. computation is stuck in a state) but there are pending requests, then simulation time can be forwarded to the point where the next request is finished and its results are visible to the procedural system. This may trigger new rules and continue the computation.

The \emph{no rule} transition can be modeled in CHR by one individual generic rule:
\begin{equation*}
no ~@~ \constr{gamma}(B,C,D) \Leftrightarrow D > 0 ~|~ \constr{gamma}(B,C,0)
\end{equation*}
This transition is possible for all requests that are pending (i.e. that have a delay $D > 0$). Hence, the system chooses one request non-deterministically.

\section{Soundness and Completeness of the Translation}
\label{sec:soundness_completeness}

In this section, we show that our translation is sound and complete w.r.t. the abstract semantics of ACT-R. This means that every transition that is possible in the abstract ACT-R semantics is also possible in CHR and vice versa.

The first step is to show that the results of the built-in constraints in the body of a translated ACT-R rule are equivalent to the interpretation of the right-hand side of the ACT-R rule. Therefore, we use that by definition the built-in constraints $\constr{action}, \constr{merge}$ and $\constr{map}$ are equivalent to their ACT-R counterparts. Additionally, we have to show that the combination of the results of individual actions leads to the same result as the built-in constraints. We use induction to show that in the following lemma.
\begin{lemma}[equivalence of effects]
\label{lemma:soundness_actions}
For an ACT-R rule $r := L \Rightarrow R$ in set-normal form, a state $\sigma := \langle \Delta,\gamma,\addinfo,t \rangle$ and $D = \fun{chr}(\Delta)$. Let $I(r,\sigma) := (\Delta^*,\gamma^*,\addinfo^*)$ with $ \bigwedge_{b \in \fun{dom}(\gamma^*)} \gamma^*(b) = (c_b^*,d_b^*)$. Then the following two propositions are equivalent:
\begin{enumerate}
 \item \begin{align*}
        \bigwedge_{\alpha_b \in R} \constr{action}(\alpha_b, D, G, \fun{resstore}(b), \fun{resid}(b), \fun{resdelay}(b)) \land \addinfo  ~\land\\
        \constr{merge}([\fun{resstore}(b) : a(b,t,P) \in R],D^*) ~\land\\
        \constr{merge}([D,D^*],D') ~\land\\
          \bigwedge_{\alpha_b \in R} \constr{map}(D,D^*,\fun{resid}(b),\fun{mergeid}(b))
       \end{align*}
 \item \begin{align*}
     D^* = \fun{chr}(\Delta^*) \land D' = \fun{chr}(\Delta \chunkmerge \Delta^*) \land \\\bigwedge_{b \in \fun{dom}(\gamma^*)} (\fun{mergeid}(b) = \id_{\Delta \chunkmerge \Delta^*}(\fun{map}_{\Delta,\Delta^*}(c_b^*)) \land \fun{resdelay}(b) = d_b^*) \land \addinfo^* \land \addinfo
       \end{align*}
\end{enumerate}
\end{lemma}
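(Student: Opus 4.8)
The plan is to establish both implications at once by reading each built-in constraint as a faithful transcription, through its defining constraint theory, of the matching construction in the rule interpretation. Since in the abstract semantics $\applyrule_r$ is the identity, Definition~\ref{def:interpretation_rule} gives $I(r,\sigma) = \bigcomb_{\alpha \in R} I(\alpha,\sigma) = (\Delta^*,\gamma^*,\addinfo^*)$, so it suffices to show that the flat conjunction of $\constr{action}$, $\constr{merge}$ and $\constr{map}$ constraints computes exactly the components of this iterated combination. I would carry this out by induction on the number $|R|$ of actions, fixing one permutation of $R$ so that the order of the merge list $[\fun{resstore}(b) : a(b,t,P) \in R]$ coincides with the order of the folds in $\bigcomb$; any other order yields a chunk store equal up to the identifier renaming discussed after Definition~\ref{def:chunkmerge}, so the equivalence does not depend on it.

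First I would fix the per-action data. As $r$ is in set-normal form and at most one action is allowed per buffer, each $b \in \fun{dom}(\gamma^*)$ is the scope of a unique action $\alpha_b = a(b,t,P) \in R$ with effect $(\{ c^{*}_b \},\gamma^*_{\alpha_b},\addinfo^*_{\alpha_b}) \in I(\alpha_b,\sigma)$, where in the abstract semantics the action store is the singleton $\{ c^{*}_b \}$ with an identifier renamed apart from $\Delta$ and from all other action stores (Definition~\ref{def:interpretation_action}, item~2, and Definition~\ref{def:interpretation_rule}). By Definition~\ref{def:builtin:actions} the conjunct $\constr{action}(\alpha_b, D, G, \fun{resstore}(b), \fun{resid}(b), \fun{resdelay}(b)) \land \addinfo$ binds $\fun{resstore}(b) = \fun{chr}(\{ c^{*}_b \})$, $\fun{resid}(b) = \id_{\{ c^{*}_b \}}(c^{*}_b)$ and $\fun{resdelay}(b) = d^{*}_b$, and conjoins $\addinfo^*_{\alpha_b}$ to the store.

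For the induction, the base case $|R| = 1$ is immediate: the first $\constr{merge}$ merges a singleton list, so $D^* = \fun{chr}(\Delta^*)$, the second gives $D' = \fun{chr}(\Delta \chunkmerge \Delta^*)$ by Definition~\ref{def:builtin:merge}, and $\constr{map}$ gives the required identifier by Definition~\ref{def:builtin:map}. For the step I would write $R = R' \uplus \{ \alpha_0 \}$ and use the combination operator, $\bigcomb_{\alpha \in R} I(\alpha,\sigma) = \bigl( \bigcomb_{\alpha \in R'} I(\alpha,\sigma) \bigr) \comb I(\alpha_0,\sigma)$. The operator $\comb$ merges the two chunk stores with $\chunkmerge$ and remaps the cognitive state through $\fun{map}$; this is exactly the effect of extending the merge list by the new $\fun{resstore}(b)$ and adding the new $\constr{map}$ conjunct. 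Associativity of $\chunkmerge$ (Definition~\ref{def:chunkmerge}, item~\ref{def:chunkmerge:assoc}) then lets me fold the nested merges into the single constraint $\constr{merge}([\fun{resstore}(b) : a(b,t,P) \in R], D^*)$, giving $D^* = \fun{chr}(\Delta^*)$; the outer $\constr{merge}([D,D^*],D')$ yields $D' = \fun{chr}(\Delta \chunkmerge \Delta^*)$, and the conjunction of the $\addinfo^*_{\alpha_b}$ reproduces $\addinfo^*$.

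The main obstacle, and the only genuinely delicate point, is reconciling the two notions of ``the chunk of buffer $b$'': $\fun{resid}(b)$ names $c^{*}_b$ inside the isolated singleton store, whereas Proposition~(2) names it as a member of the merged store $\Delta^*$ through $\fun{map}_{\Delta,\Delta^*}$. The bridge is that the action stores are renamed apart, so by items~\ref{def:chunkmerge:idempot}--\ref{def:chunkmerge:subset} of Definition~\ref{def:chunkmerge} their merge is a disjoint union that preserves all identifiers and on which $\fun{map}$ acts as the identity; hence $\id^{-1}_{\Delta^*}(\fun{resid}(b)) = c^{*}_b$, and $c^{*}_b$ as it occurs in $\gamma^*$ really is this same chunk, since the combination only applied identity maps. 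Because the identifiers of $\Delta^*$ are disjoint from those of $\Delta$, the term $\fun{resid}(b)$ does not occur in $D = \fun{chr}(\Delta)$, so the ``otherwise'' branch of Definition~\ref{def:builtin:map} fires and gives $\fun{mergeid}(b) = \id_{\Delta \chunkmerge \Delta^*}(\fun{map}_{\Delta,\Delta^*}(c^{*}_b))$, exactly the binding in Proposition~(2); the delay $\fun{resdelay}(b) = d^{*}_b$ matches because $\comb$ carries delays through unchanged.
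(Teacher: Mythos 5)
Your proposal is correct and takes essentially the same route as the paper's own proof: induction on the number of actions, unfolding the $\constr{action}$, $\constr{merge}$, and $\constr{map}$ built-ins via their defining equivalences, and using associativity of $\chunkmerge$ (plus the fact that $\applyrule$ is the identity in the abstract semantics) to fold the per-action effects into the single merge list. The only notable difference is presentational: your explicit justification that $\fun{resid}(b)$ still identifies the correct chunk inside the merged store (so the ``otherwise'' branch of the $\constr{map}$ built-in applies) is a point the paper's induction step settles by mere assertion (``the $C_b$ are identifiers for the chunks in $\gamma^{**}$''), so you are, if anything, more careful at the one delicate spot.
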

\begin{proof}
We use induction over the number of actions in $R$.

\begin{description}
 \item[base case: $|R| = 1$] Let $r := L \Rightarrow R$ be an ACT-R rule in set-normal form with one action $\alpha \in R$ and $\sigma := \langle \Delta,\gamma,\addinfo,t \rangle$. Let $D = \fun{chr}(\Delta)$ and $G = \llbracket \gamma \rrbracket$. Let $I(r,\sigma) := (\Delta^*,\gamma^*,\addinfo^*)$ with $ \bigwedge_{b \in \fun{dom}(\gamma^*)} \gamma^*(b) = (c_b,d_b)$ and $I(\alpha,\sigma) := (\Delta_\alpha,\gamma_\alpha,\addinfo_\alpha)$ with $(c_\alpha,d_\alpha) = \gamma_\alpha(b)$. 
 
 We start with
 \begin{align*}
       \constr{action}(\alpha, D, G, D_\mathit{res}, C_\mathit{res}, E_\mathit{res}) \land \addinfo  ~\land\\
        \constr{merge}([D_\mathit{res}],D^*) ~\land\\
        \constr{merge}([D,D^*],D') ~\land\\
          \constr{map}(D,D^*,C_\mathit{res},C_\mathit{res}').
       \end{align*}
       
       First of all, we reduce the $\constr{action}$ constraint by use of definition~\ref{def:builtin:actions}: 
       \begin{align*}
    \constr{action}(\alpha, D, G, D_\mathit{res}, C_\mathit{res}, E_\mathit{res})
    \land \addinfo ~\leftrightarrow~ & D_\mathit{res} = \fun{chr}(\Delta_\alpha) \land C_\mathit{res} = \id_{\Delta_\alpha}(c_\alpha) \land E_\mathit{res} = d_\alpha \\ \land & \addinfo \land \addinfo_\alpha.
    \end{align*}
   
   By definition~\ref{def:builtin:merge}, we can reduce the $\constr{merge}$ constraints to
   \begin{equation*}
    D^* = \fun{chr}(\fun{chr}^{-1}(D_\mathit{res})) \land D' = \fun{chr}(\fun{chr}^{-1}(D) \chunkmerge \fun{chr}^{-1}(D^*))
   \end{equation*}
  which is by definition of $D := \fun{chr}(\Delta)$ from the assumptions and $D_\mathit{res} = \fun{chr}(\Delta_\alpha)$ from the last step equivalent to
  \begin{equation*}
   D^* = \fun{chr}(\Delta_\alpha) \land D' = \fun{chr}(\Delta \chunkmerge \Delta_\alpha).
  \end{equation*}

  Since $C_\mathit{res} = \id_{\Delta_\alpha}(c_\alpha)$ and therefore a $\fun{chunk}$ term with identifier $C_\mathit{res}$ appears in $D^*$ (and not in $D$), we can now reduce the $\constr{map}$ built-in by definition~\ref{def:builtin:map} and get together with the definitions of $D$ and $D^*$ to
  \begin{equation*}
\constr{map}(D,D^*,C_\mathit{res},C_\mathit{res}') \leftrightarrow C_\mathit{res}' = \id_{\Delta \chunkmerge \Delta_\alpha}(\fun{map}_{\Delta,\Delta_\alpha}(\id_{\Delta_\alpha}^{-1}(C_\mathit{res}))).
\end{equation*}
Since we have that $C_\mathit{res} = \id_{\Delta_\alpha}(c_\alpha)$, this is equivalent to
     \begin{equation*}
C_\mathit{res}' = \id_{\Delta \chunkmerge \Delta_\alpha}(\fun{map}_{\Delta,\Delta_\alpha}(c_\alpha)).
\end{equation*}
By definition~\ref{def:interpretation_rule}, we have that for one action $\Delta_\alpha = \Delta^*$ and $c_\alpha = c^*$. Hence, this is equivalent to
     \begin{equation*}
C_\mathit{res}' = \id_{\Delta \chunkmerge \Delta^*}(\fun{map}_{\Delta,\Delta^*}(c^*)).
\end{equation*}

All in all, this proves the proposition for $|R| = 1$.

   
    \item[induction step: $|R| \rightarrow |R| + 1$]
    Let $\sigma := \langle \Delta ; \gamma ; \addinfo \rangle$ be an ACT-R state and $D := \fun{chr}(\Delta)$ the CHR representation of the chunk store and $G := \llbracket \gamma \rrbracket$ the enumerative list representation of the cognitive state.
    
    Let $r' := L \Rightarrow R'$ with $R' := R \cup \alpha$ be a rule that has been constructed from a rule $r := L \Rightarrow R$. Let $I(r,\sigma) := (\Delta^*,\gamma^*,\addinfo^*)$ with $ \bigwedge_{b \in \fun{dom}(\gamma^*)} \gamma^*(b) = (c_b^*,d_b^*)$ be the interpretation of the smaller rule $r$ with $|R|$ actions. 
    
    Let $I(r',\sigma) := (\Delta^{**},\gamma^{**},\addinfo^{**})$ with $\bigwedge_{b \in \fun{dom}(\gamma^{**})} \gamma^{**}(b) = (c_b^{**},d_b^{**})$ be the interpretation of the rule $r'$ that has $|R| + 1$ actions..
    
    We begin with
    \begin{align*}
        \bigwedge_{\alpha_b \in R'} \constr{action}(\alpha_b, D, G, D_b, C_b, E_b) \land \addinfo  ~\land\\
        \constr{merge}([D_b : a(b,t,P) \in R'],D^{**}) ~\land\\ 
         \constr{merge}([D,D^{**}],D'') ~\land\\
          \bigwedge_{\alpha_b \in R'} \constr{map}(D,D^{**},C_b,C_b'').
       \end{align*}    
       
       We need to apply the induction hypothesis. Therefore, we split the conjunctions and lists and get the equivalent formula
       \begin{align*}
        \bigwedge_{\alpha_b \in R} \constr{action}(\alpha_b, D, G, D_b, C_b, E_b) \land \constr{action}(\alpha_b, D, G, D_\alpha, C_\alpha, E_\alpha)  \land \addinfo  ~\land\\
        \constr{merge}([D_b : a(b,t,P) \in R] ++ [D_\alpha],D^{**}) ~\land\\ 
         \constr{merge}([D,D^{**}],D'') ~\land\\
          \bigwedge_{\alpha_b \in R} \constr{map}(D,D^{**},C_b,C_b'') \land \constr{map}(D,D^{**},C_\alpha,C_\alpha').
       \end{align*}
       
       By definition~\ref{def:builtin:actions} of $\constr{merge}$   
       and associativity of $\chunkmerge$ and neutral element $[] = \fun{chr}(\emptyset)$, we can split the merging as follows: We first merge the actions in $R$ to $D^*$ and then merge $D^*$ with the result chunk of action $\alpha$ to $D^{**}$:
       \begin{align*}
        \bigwedge_{\alpha_b \in R} \constr{action}(\alpha_b, D, G, D_b, C_b, E_b) \land \constr{action}(\alpha, C_\alpha, T_\alpha, P_\alpha, D_\alpha) \land \addinfo  ~\land\\
        \constr{merge}([D_b : a(b,t,P) \in R],D^*) ~\land \\
         \constr{merge}([D^*,D_\alpha],D^{**}) ~\land\\
         \constr{merge}([D,D^{**}],D'') ~\land\\
         \bigwedge_{\alpha_b \in R} \constr{map}(D,D^{**},C_b,C_b'') \land \constr{map}(D,D^{**},C_\alpha,C_\alpha').
       \end{align*}
       
       We can now introduce an intermediate result chunk store that merges the original store $D$ with the results from $r$, i.e. $D^*$, to a chunk store $D'$. We introduce some auxiliary variables $C_b'$ that map the chunk identifiers of the intermediate chunk store $D^*$ to the resulting chunk store $D^*$:
        \begin{align*}
        \bigwedge_{\alpha_b \in R} \constr{action}(\alpha_b, D, G, D_b, C_b, E_b) \land \constr{action}(\alpha, C_\alpha, T_\alpha, P_\alpha, D_\alpha) \land \addinfo  ~\land\\
        \constr{merge}([D_b : a(b,t,P) \in R],D^*) ~\land \\
         \constr{merge}([D,D^*],D') ~\land\\
         \bigwedge_{\alpha_b \in R} \constr{map}(D,D^*,C_b,C_b') ~\land\\
         \constr{merge}([D^*,D_\alpha],D^{**}) ~\land\\
         \constr{merge}([D,D^{**}],D'') ~\land\\
         \bigwedge_{\alpha_b \in R} \constr{map}(D,D^{**},C_b,C_b'') \land \constr{map}(D,D^{**},C_\alpha,C_\alpha').
       \end{align*}
             
       We can now apply the induction hypothesis:
       \begin{align*}
       (*) := &\\
         &\addinfo \land \addinfo^* ~\land\\
         &D^* = \fun{chr}(\Delta^*) \land D' = \fun{chr}(\Delta \chunkmerge \Delta^*) ~\land\\
         &\bigwedge_{b \in \fun{dom}(\gamma^*)} (C_b' = \id_{\Delta \chunkmerge \Delta^*}(\fun{map}_{\Delta,\Delta^*}(c_b^*)) \land E_b = d_b^*)  ~\land\\
         &\constr{merge}([D^*,[\fun{chunk}(C_\alpha,T_\alpha,P_\alpha)]],D^{**}) ~\land\\
         &\constr{merge}([D,D^{**}],D'') ~\land\\
         &\bigwedge_{\alpha_b \in R} \constr{map}(D,D^{**},C_b,C_b'') \land \constr{map}(D,D^{**},C_\alpha,C_\alpha').
       \end{align*}
       Thereby, it holds by definition of $I(r,\sigma)$ that $\fun{dom}(\gamma^*) = \{b ~|~ \alpha_b \in R \}$. 

       If we apply definition~\ref{def:builtin:actions} to the remaining $\constr{action}$ constraint of action $\alpha$, we get
       \begin{align*}
       \constr{action}(\alpha, D, G, D_\alpha, C_\alpha, E_\alpha)
 \land \addinfo ~\leftrightarrow~ & D_\alpha = \fun{chr}(\Delta_\alpha) \land C_\alpha = \id_{\Delta_\alpha}(c_\alpha) \land E_\alpha = d_\alpha \\ \land & \addinfo \land \addinfo^*.
      \end{align*}
     
       Hence, we have that 
       \begin{equation*}
        \constr{merge}([D^*,D_\alpha,D^{**}) \leftrightarrow D^{**} = \fun{chr}(\Delta^* \chunkmerge \Delta_\alpha).
       \end{equation*}
        Thus, $D^{**}$ is the CHR version of the merging of the results from the actions in $R$ merged with the results from $\alpha$. By definition~\ref{def:interpretation_rule} of the interpretation of rules, we have that
        \begin{equation*}
         I(r',\sigma) = (\Delta^* \chunkmerge \Delta_\alpha,\gamma' \cup \gamma_\alpha, \addinfo^* \land \addinfo_\alpha).
        \end{equation*}
        This is equivalent to
        \begin{equation*}
         I(r',\sigma) = (\Delta^{**},\gamma^{**}, \addinfo^{**})
        \end{equation*}
        by definition of $r, r'$ and the interpretation of rules (definition~\ref{def:interpretation_rule}).
        
        We now reduce the last $\constr{merge}$ constraint to:
        \begin{equation*}
         \constr{merge}([D,D^{**}],D'') \leftrightarrow D'' = \fun{chr}(\Delta \chunkmerge \Delta^{**})
        \end{equation*}
	
	We can now reconnect the $\constr{map}$ constraints and get
	\begin{align*}
	 \bigwedge_{\alpha_b \in R} \constr{map}(D,D^{**},C_b,C_b'') \land \constr{map}(D,D^{**},C_\alpha,C_\alpha') \leftrightarrow  & \bigwedge_{\alpha_b \in R'} \constr{map}(D,D^{**},C_b,C_b'') 	 
	\end{align*}
	
	By definition~\ref{def:unified_operational_semantics}, for all defined buffers $\gamma^{**}(b) = \fun{map}_{\Delta,\Delta^{**}}(c_b)$ and we have assumed that $\gamma^{**}(b) = c_b^{**}$. By definition~\ref{def:builtin:map}, this yields
	\begin{equation*}
	  C_b'' = \id_{\Delta \chunkmerge \Delta^{**}}(\fun{map}_{\Delta,\Delta^{**}}(\id_{\Delta^{**}}^{-1}(C_b)))
         \end{equation*}.
         $\id_{\Delta^{**}}^{-1}(C_b)$ exists, since the co-domain of $\gamma^{**}$ is $\Delta^{**} \times \mathbb{R}^+_0$ by definition~\ref{def:partial_chunk_store} and the $C_b$ are identifiers for the chunks in $\gamma^{**}$.
        
        If we apply all this to the conjunction in $(*)$, we get 
         \begin{align*}
     D^{**} = \fun{chr}(\Delta^{**}) \land D'' = \fun{chr}(\Delta \chunkmerge \Delta^{**}) ~\land\\
     \bigwedge_{b \in \fun{dom}(\gamma^{**})} (C_b'' = \id_{\Delta \chunkmerge \Delta^{**}}(\fun{map}_{\Delta,\Delta^{**}}(c_b^{**})) \land E_b = d_b^{**}) \land \addinfo^{**} \land \addinfo
       \end{align*}
\end{description}
\end{proof}

The next lemma proves that rule application transitions are sound.
\begin{lemma}[soundness of rule applications]
\label{lemma:soundness_rules}
For all ACT-R rules $r$ and ACT-R states $\sigma, \sigma' \in \states_\abstr:$ if $\sigma \utrans^r \sigma'$ then $\fun{chr}(\sigma) \mapsto^r \fun{chr}(\sigma')$. 
\end{lemma}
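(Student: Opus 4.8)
The plan is to take the single abstract rule-application witness, exhibit the matching instance of $\fun{chr}(r)$, fire it via Definition~\ref{def:chr_operational_semantics}, and then identify the successor CHR state with $\fun{chr}(\sigma')$ using the state-equivalence rules of Definition~\ref{def:chr_state_equiv} together with Lemma~\ref{lemma:soundness_actions}. As the translation presupposes, I assume $r := L \Rightarrow R$ is in set-normal form. From $\sigma \utrans^r \sigma'$ I fix the witnessing substitution $\theta = \fun{Bindings}(r,\sigma)$ and effect $(\Delta^*,\gamma^*,\addinfo^*) \in I(r\theta,\sigma)$, so that $\sigma' = \langle \Delta \chunkmerge \Delta^* ; \gamma' ; \addinfo \land \addinfo^* \rangle$ with $\gamma'$ as in Definition~\ref{def:unified_operational_semantics} and $\fun{dom}(\gamma^*) = \{ b \mid a(b,t,P) \in R \}$.

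First I would show $\fun{chr}(r)$ is applicable in $\fun{chr}(\sigma)$. Its head $\constr{delta}(D) \uplus \{ \constr{gamma}(b,\fun{cvar}(b),\fun{dvar}(b)) \mid b \in \buffers \}$ matches the unique $\constr{delta}$ constraint and the one $\constr{gamma}$ constraint per buffer through the bindings $D = \fun{chr}(\Delta)$, $\fun{cvar}(b) = \id_\Delta(c)$, $\fun{dvar}(b) = d$ whenever $\gamma(b) = (c,d)$; this consumes all $\constr{delta}$ and $\constr{gamma}$ constraints, leaving an empty remaining goal. Because $r \matches \sigma$, for every test $\test{b}{t}{P} \in L$ the chunk in $b$ has type $t$, delay $0$, and slot values with $\id_\Delta(\val(s)) = v\theta$ for $(s,v) \in P$; hence the guard conjunct $\fun{dvar}(b) = 0$ is satisfied and, by set-normal form, the conjunct $\constr{chunk}(\fun{cvar}(b),t,P) ~\constr{in}~ D$ unifies with the (complete) chunk term in $D$, binding the rule variables exactly as $\theta$ does. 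Adding these entailed, fresh-variable bindings to the store keeps $\fun{chr}(\sigma)$ in the same equivalence class, so it is equivalent to a pre-transition state of the shape demanded by Definition~\ref{def:chr_operational_semantics} and the rule fires.

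Firing this simplification rule ($H_k = \emptyset$) produces a successor whose goal is the body constraints — the new $\constr{delta}(D^*)$, the $\constr{gamma}(b,\fun{mergeid}(b),\fun{resdelay}(b))$ for affected buffers ($a(b,t,P) \in R$), and the retained $\constr{gamma}(b,\fun{cvar}(b),\fun{dvar}(b))$ for the rest — and whose built-in store is the guard together with the $\constr{action}$, $\constr{merge}$, $\constr{map}$ built-ins and the original $\addinfo$. I then apply Lemma~\ref{lemma:soundness_actions} (matching its $D^*, D'$ to the translation's $D', D^*$) to collapse these built-ins to $D^* = \fun{chr}(\Delta \chunkmerge \Delta^*)$, together with $\fun{mergeid}(b) = \id_{\Delta \chunkmerge \Delta^*}(\fun{map}_{\Delta,\Delta^*}(c_b^*))$ and $\fun{resdelay}(b) = d_b^*$ for each $b \in \fun{dom}(\gamma^*)$, and $\addinfo \land \addinfo^*$.

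It remains to identify the successor with $\fun{chr}(\sigma')$ by state equivalence, which I expect to be the main obstacle. Using equality-as-substitution I substitute the entailed values for $D^*$, $\fun{mergeid}(b)$ and $\fun{resdelay}(b)$, turning each affected $\constr{gamma}$ into $\constr{gamma}(b, \id_{\Delta\chunkmerge\Delta^*}(\fun{map}_{\Delta,\Delta^*}(c_b^*)), d_b^*)$, which is exactly the constraint of $\fun{chr}(\sigma')$ since there $\gamma'(b) = (\fun{map}_{\Delta,\Delta^*}(c_b^*), d_b^*)$. For the unaffected buffers I invoke Definition~\ref{def:chunkmerge}: $\Delta \subseteq \Delta \chunkmerge \Delta^*$ with preservative identifiers gives $\id_{\Delta\chunkmerge\Delta^*}(c) = \id_\Delta(c) = \fun{cvar}(b)$, and property~\ref{def:chunkmerge:subset} gives $\fun{map}_{\Delta,\Delta^*}(c) = c$, so the retained $\constr{gamma}$ constraints already coincide with those of $\fun{chr}(\sigma')$. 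Finally all auxiliary variables ($D^*$, $D'$, the $\fun{resstore}/\fun{resid}/\fun{resdelay}/\fun{mergeid}$ images and the rule's slot variables) are local, the global set being $\emptyset$, so transformation-of-the-constraint-store existentially projects them away and leaves the store equivalent to $\addinfo \land \addinfo^*$. The delicate part is exactly this projection: arguing rigorously that every guard- and body-introduced variable is local and may be eliminated so the store reduces precisely to $\addinfo \land \addinfo^*$, while the substituted values reconstitute precisely the $\constr{delta}$ and $\constr{gamma}$ constraints of $\fun{chr}(\sigma')$, and keeping the $D^*/D'$ naming swap between Lemma~\ref{lemma:soundness_actions} and Definition~\ref{def:translation_rules} consistent throughout.
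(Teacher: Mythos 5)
Your proposal is correct and follows essentially the same route as the paper's proof: establish applicability of $\fun{chr}(r)$ from $r \matches \sigma$ and set-normal form, fire the simplification rule via Definition~\ref{def:chr_operational_semantics}, collapse the $\constr{action}$/$\constr{merge}$/$\constr{map}$ built-ins with Lemma~\ref{lemma:soundness_actions}, and identify the successor with $\fun{chr}(\sigma')$ through the state-equivalence axioms. If anything, you are more explicit than the paper on the $D^*/D'$ naming swap, the preservative-identifier argument for unaffected buffers, and the projection of local auxiliary variables, all of which the paper's proof glosses over.
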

\begin{proof}
Let $r := L \Rightarrow R$ and $\sigma := \langle \Delta ; \gamma ; \addinfo ; 0 \rangle$. Since $r \matches \sigma$, we know that for every buffer test $\beta := \test{b}{t}{P}$ there is a substitution $\theta$ such that $\gamma(b) = ((t,\val),0)$ and for all $(s,v) \in P : \id_\Delta(\val(s)) = v\theta$. 

Let $\rho := \fun{chr}(\sigma)$ the CHR translation of the ACT-R state $\sigma$. By definition~\ref{def:translation_states}, we have that 
\begin{align*}
\rho \equiv \langle & \{ \constr{delta}([\fun{chunk}(\id_\Delta(c),t,\llbracket \val \rrbracket) ~:~ c \in \Delta \land c = (t,\val)]) \} \\
                    & \uplus \{ \constr{gamma}(b,\id_\Delta(c),d) ~|~ b \in \buffers \land \gamma(b) = (c,d) \} ; \addinfo ; \emptyset \rangle
\end{align*}

In the next step we split the state to only concentrate on the parts we are interested in for the rule application, i.e. $\constr{gamma}$ constraints for buffers that occur in a test. We do the same by splitting the representation of the constraint store to the chunks that occur in buffers and all other chunks. This is possible, since $\fun{chr}(\Delta)$ does not define an order on the $\fun{chunk}$ terms in the list in the $\constr{delta}$ constraint. 
\begin{align*}
\rho \equiv \langle & \{ \constr{delta}([\fun{chunk}(\id_\Delta(c),t,\llbracket \val \rrbracket) ~:~ \test{b}{t'}{P} \in L \land \gamma(b) = (c,0) \land c = (t,\val)] {++} \mathbb{D}) \} \\
                    & \uplus \{ \constr{gamma}(b,\id_\Delta(c),d) ~|~ \test{b}{t}{P} \in L \land \gamma(b) = (c,d) \} \uplus \mathbb{G} ; \addinfo  ; \emptyset \rangle
\end{align*}
Note that by now we only have rewritten the CHR state without requiring that the tests we refer to match. 

Due to the fact that $r \matches \sigma$, we can apply this knowledge to the translated state:
\begin{align*}
\rho \equiv \langle & \{ \constr{delta}([\fun{chunk}(\id_\Delta(c),t,\llbracket \val \rrbracket) ~:~ \test{b}{t}{P} \in L \land \gamma(b) = (c,0) \land c = (t,\val)\\
                    & ~~\land \text{ for all }(s,v) \in P : \id_\Delta(\val(s)) = v\theta] {++} \mathbb{D}) \} \\
                    & \uplus \{ \constr{gamma}(b,\id_\Delta(c),0) ~|~ \test{b}{t}{P} \in L \land \gamma(b) = (c,0) \} \uplus \mathbb{G} ; \addinfo  ; \emptyset \rangle
\end{align*}
Since $r$ is in set-normal form, we can assume that there is only one test for each buffer in $L$. Hence, we find a $\constr{gamma}$ constraint in $\rho$ for each such test without violating definition~\ref{def:translation_states}, that only produces one $\constr{gamma}$ constraint for each buffer.

Due to set-normal form of $r$ and hence totality of $P$ w.r.t. the domain of $\val$, it is clear that $P\theta = \llbracket \val \rrbracket$. Hence, we can reduce the state to: 
\begin{align*}
\rho \equiv \langle & \{ \constr{delta}([\fun{chunk}(\id_\Delta(c),t,P\theta) ~:~ \test{b}{t}{P} \in L \land \gamma(b) = (c,0)] {++} \mathbb{D}) \} \\
                    & \uplus \{ \constr{gamma}(b,\id_\Delta(c),0) ~|~ \test{b}{t}{P} \in L \land \gamma(b) = (c,0) \} \uplus \mathbb{G} ; \addinfo  ; \emptyset \rangle
\end{align*}

We introduce a substitution with fresh variables to replace the chunk identifiers in the state, i.e. $\theta' := \{ \cvar(b)/\id_\Delta(c) ~|~ \gamma(b) = c \}$. 
\begin{align*}
\rho \equiv \langle & \{ \constr{delta}([\fun{chunk}(\fun{cvar}(b)\theta',t,P\theta) ~:~ \test{b}{t}{P} \in L] {++} \mathbb{D}) \} \\
                    & \uplus \{ \constr{gamma}(b,\fun{cvar}(b)\theta',0) ~|~ \test{b}{t}{P} \in L \} \uplus \mathbb{G} ; \addinfo  ; \emptyset \rangle
\end{align*}

Let $\Theta$ be the conjunction of syntactic equality constraints that can be derived from the substitution $\theta \cup \theta'$, i.e. each substitution $x/t \in \theta \cup \theta'$ appears in $\Theta$ as $x=t$. By definition~\ref{def:chr_state_equiv:substitution}, we can move the substitution to the built-in store:
\begin{align*}
\rho \equiv \langle & \{ \constr{delta}([\fun{chunk}(\fun{cvar}(b),t,P) ~:~ \test{b}{t}{P} \in L] {++} \mathbb{D}) \} \\
                    & \uplus \{ \constr{gamma}(b,\fun{cvar}(b),0) ~|~ \test{b}{t}{P} \in L \} \uplus \mathbb{G} ; \Theta \uplus \addinfo  ; \emptyset \rangle
\end{align*}

We introduce a fresh variable $D$ and add $D{=}[\fun{chunk}(\fun{cvar}(b),t,P) ~:~ \test{b}{t}{P} \in L] {++} \mathbb{D}$ to the built-in store, which leads to the equivalent state:
\begin{align*}
\rho \equiv \langle & \{ \constr{delta}(D) \} \uplus \{ \constr{gamma}(b,\fun{cvar}(b),0) ~|~ \test{b}{t}{P} \in L \} \uplus \mathbb{G} ;\\ & \quad \Theta \uplus D{=}[\fun{chunk}(\fun{cvar}(b),t,P) ~:~ \test{b}{t}{P} \in L] {++} \mathbb{D} \uplus \addinfo  ; \emptyset \rangle
\end{align*}

By definition~\ref{def:builtin:chunk_in}, we have that $D{=}d \leftrightarrow \{ \fun{chunk}(\fun{cvar}(b),t,P) ~\constr{in}~ D ~|~ \test{b}{t}{P} \}$. We can replace the corresponding built-ins by definition~\ref{def:chr_state_equiv:equiv}:
\begin{align*}
\rho \equiv \langle & \{ \constr{delta}(D) \} \uplus \{ \constr{gamma}(b,\fun{cvar}(b),0) ~|~ \test{b}{t}{P} \in L \} \uplus \mathbb{G} ; \\
                    & \Theta \uplus \{ \fun{chunk}(\fun{cvar}(b),t,P) ~\constr{in}~ D ~|~ \test{b}{t}{P} \} \uplus \addinfo  ; \emptyset \rangle
\end{align*}

Let $\fun{chr}(r) := H \Leftrightarrow G ~|~ B_c, B_b$. By definition~\ref{def:translation_rules}, we have that
\begin{align*}
\rho \equiv \langle & H \uplus \mathbb{G} ; \Theta \uplus G \uplus \addinfo  ; \emptyset \rangle 
\end{align*}
By definition~\ref{def:chr_operational_semantics}, $\fun{chr}(r)$ is applicable in $\rho \equiv \fun{chr}(\sigma)$ and $\rho \mapsto^r \rho'$ with
\begin{align*}
\rho' \equiv \langle & B_c \uplus \mathbb{G} ; \Theta \land G \land B_b \land \addinfo  ; \emptyset \rangle 
\end{align*}
Due to the definition~\ref{def:translation_rules} of $\fun{chr}(r)$, we have that
\begin{align*}
 B_b := & \bigwedge_{\alpha = a(b,t,P) \in R } \left(\constr{action}\left(\alpha, \fun{resid}\left(b\right), \fun{restype}\left(b\right), \fun{resslots}\left(b\right), \fun{resdelay}\left(b\right)\right)\right) ~\land \\
        & \constr{merge}([[\fun{chunk}(\fun{resid}(b),\fun{restype}(b),\fun{resslots}(b))] : a(b,t,P) \in R],D') ~\land\\
        & \constr{merge}([D,D'],D^*) ~\land\\
        & \bigwedge_{a(b,t,P) \in R} (\constr{map}(D,D',\fun{resid}(b),\fun{mergeid}(b))) \\
 B_c := & \{ \constr{delta}(D^*)\} ~\uplus \\
        &  \{ \constr{gamma}(b,\fun{mergeid}(b),\fun{resdelay}(b)) ~|~ a(b,t,P) \in R \}     
\end{align*}
Since $r$ is in set-normal form, all buffers appearing in $L$ also appear in $R$. Hence, all $\constr{gamma}$ constraints removed by $\fun{chr}(r)$ are added in $B_c$. There is exactly one $\constr{delta}$ constraint in $\rho'$. It remains to show that the $\constr{delta}$ and $\constr{gamma}$ constraints are the ones describing the state $\sigma'$.

By lemma~\ref{lemma:soundness_actions}, we have that for all $I(r,\sigma) \ni (\Delta^*,\gamma^*,\addinfo^*)$ with $\gamma^*(b) = (c_b^*,d_b^*)$ for all $b \in \fun{dom}(\gamma^*)$
\begin{align*}
 B_b \leftrightarrow &  ~D^* = \fun{chr}(\Delta^*) \land D' = \fun{chr}(\Delta \chunkmerge \Delta^*) ~\land \\ 
                     &\bigwedge_{b \in \fun{dom}(\gamma^*)} (\fun{mergeid}(b) = \id_{\Delta \chunkmerge \Delta^*}(\fun{map}_{\Delta,\Delta^*}(c_b^*)) \land \fun{resdelay}(b) = d_b^*) \land \addinfo^* \land \addinfo
\end{align*}
 All in all, we get by definition~\ref{def:translation_states} that $\rho' \equiv \fun{chr}(\sigma')$. Hence, the translation of rule applications is sound w.r.t. the abstract operational semantics of ACT-R. 
\end{proof}


We have now shown that our translation is sound regarding rule applications, i.e. a rule that can be applied in the abstract semantics of ACT-R can also be applied in the translated CHR program and leads to the same result. The state transition system of ACT-R has a second type of transitions: the \emph{no-rule transitions}, that can be applied if there is a buffer with a delay $> 0$, i.e. an invisible chunk. In that case, the no-rule transition allows us to make this chunk visible to the production system by setting the delay to zero. We now show that our translation is sound and complete regarding the no-rule transition.
\begin{lemma}[soundness and completeness of the no-rule transition]
\label{lemma:soundness_completeness_norule}
For an ACT-R model $M$, ACT-R states $\sigma$ and $\sigma'$ and their CHR counterparts $\fun{chr}(M), \fun{chr}(\sigma)$ and $\fun{chr}(\sigma')$ the following propositions are equivalent:
\begin{enumerate}
 \item $\sigma \utrans_\mathbf{no} \sigma'$ in the model $M$ \label{lemma:soundness_completeness_norule:actr}
 \item $\fun{chr}(\sigma) \mapsto^{\mathit{no}} \fun{chr}(\sigma')$ \label{lemma:soundness_completeness_norule:chr}
\end{enumerate}
\end{lemma}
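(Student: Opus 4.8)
The plan is to establish the biconditional by proving each implication separately, reusing the translation of states (definition~\ref{def:translation_states}) and the CHR operational semantics (definition~\ref{def:chr_operational_semantics}). The central observation is that $\fun{chr}(\sigma)$ contains, for each buffer $b \in \buffers$ with $\gamma(b) = (c,d)$, exactly one constraint $\constr{gamma}(b,\id_\Delta(c),d)$ whose delay argument $d$ is a ground number, together with a single $\constr{delta}$ constraint and the unchanged additional information $\addinfo$ as built-in store. The generic $no$ rule is a simplification rule with head $\constr{gamma}(B,C,D)$, guard $D > 0$ and body $\constr{gamma}(B,C,0)$; since delays are ground, its guard is entailed by the built-in store precisely when the matched delay is strictly positive.

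For the direction \ref{lemma:soundness_completeness_norule:actr} $\Rightarrow$ \ref{lemma:soundness_completeness_norule:chr}, I would take the witness buffer $b^*$ from the precondition $\gamma(b^*) = (c^*,d^*)$, $d^* > 0$ of the ACT-R no-rule transition. The translated state $\fun{chr}(\sigma)$ contains the matching constraint $\constr{gamma}(b^*,\id_\Delta(c^*),d^*)$, so the $no$ rule is applicable with binding $B = b^*$, $C = \id_\Delta(c^*)$, $D = d^*$, and its guard holds because $d^* > 0$. Firing it removes this constraint and adds $\constr{gamma}(b^*,\id_\Delta(c^*),0)$, while the $\constr{delta}$ constraint, all other $\constr{gamma}$ constraints and $\addinfo$ stay untouched. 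Since the abstract no-rule transition leaves $\Delta$ and every buffer except $b^*$ unchanged and sets only the delay of $b^*$ to $0$, the resulting CHR state is, up to state equivalence (definition~\ref{def:chr_state_equiv}), exactly $\fun{chr}(\sigma')$.

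For the direction \ref{lemma:soundness_completeness_norule:chr} $\Rightarrow$ \ref{lemma:soundness_completeness_norule:actr}, I would argue that any $no$-transition from $\fun{chr}(\sigma)$ must match one of the $\constr{gamma}$ constraints, as this is the only constraint symbol in the rule head, binding $B,C,D$ to the concrete buffer, chunk identifier and delay of some buffer $b^*$; the transition can only occur if the guard $D > 0$ is entailed, which, because the delay is ground, means $d^* > 0$. Hence the precondition of the abstract no-rule transition is satisfied for $b^*$, and reading off the post-state $\fun{chr}(\sigma')$ (whose $b^*$-delay is $0$ and which agrees with $\fun{chr}(\sigma)$ everywhere else) shows that the corresponding $\sigma'$ is exactly the one produced by $\sigma \utrans_\mathbf{no} \sigma'$.

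The main obstacle I anticipate is the careful bookkeeping of CHR state equivalence rather than any deep argument: the $no$ rule transition is defined on equivalence classes, so I must choose representatives in which the relevant $\constr{gamma}$ constraint appears syntactically and show that the identification of the bound variables $B,C,D$ with concrete values of the abstract state is faithful in both directions. I also need to confirm that the guard entailment $D > 0$ over the built-in store (consisting of the ground $\addinfo$ and the equality constraints introduced by matching) coincides exactly with the numeric condition $d^* > 0$, and that the non-deterministic choice of the pending-request buffer $b^*$ made by one system can always be mirrored by the other, so that the two transition relations induce the same pairs $(\sigma,\sigma')$.
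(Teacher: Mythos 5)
Your proposal is correct and follows essentially the same argument as the paper: identify the $\constr{gamma}(b^*,\id_\Delta(c^*),d^*)$ constraint in $\fun{chr}(\sigma)$, note the guard $D>0$ holds exactly when $d^*>0$, fire the simplification rule, and match the result against $\fun{chr}(\sigma')$ via state equivalence. The only difference is that you spell out the converse direction explicitly, whereas the paper proves the forward direction and dismisses the other as ``analogous''; your reading-off argument is precisely what that analogy amounts to.
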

\begin{proof}
We start with proposition~\ref{lemma:soundness_completeness_norule:actr}. Let $\sigma := \langle \Delta ; \gamma ; \addinfo \rangle$ and $\sigma' := \langle \Delta ; \gamma'; \addinfo \rangle$. In the abstract semantics, the no-rule transition is defined as
  \begin{equation*}
  \frac{ 
    \gamma(b^*) = (c^*,d^*) \land d^* > 0
  }{ 
    \langle \Delta ; \gamma ; \addinfo \rangle \utrans_\mathbf{no} \langle \Delta ; \gamma'; \addinfo \rangle
  }
 \end{equation*}
where $\gamma'(b) = \begin{cases}
                     (c^*,0)   & \text{if $b = b^*,$}\\
                     \gamma(b) & \text{otherwise.}
                    \end{cases}
$

The \emph{no-rule} transition in CHR is represented by the following CHR rule in $\fun{chr}(M)$:
\begin{equation*}
no ~@~ \constr{gamma}(B,C,D) \Leftrightarrow D > 0 ~|~ \constr{gamma}(B,C,0)
\end{equation*}

Since the \emph{no-rule} transition is applicable in $\sigma$ for some arbitrary but fixed $b^*$, it holds that 
\begin{equation*}
 \gamma(b^*) = (c^*,d^*) \land d^* > 0.
\end{equation*}
Therefore, in $\fun{chr}(\sigma)$ there must have the following form by definition~\ref{def:translation_states}:
\begin{equation*}
 \fun{chr}(\sigma) \equiv \langle \constr{gamma}(B,C,D) \uplus  \mathbb{G} ; B{=}b^* \land C{=}\id_\Delta(c^*) \land D{=}d^* \land \mathbb{C} ; \emptyset \rangle.
\end{equation*}
Since $d^* > 0$, this is equivalent to
\begin{equation*}
 \fun{chr}(\sigma) \equiv \langle \constr{gamma}(B,C,D) \uplus  \mathbb{G} ; D>0 \land B{=}b^* \land C{=}\id_\Delta(c^*) \land D{=}d^* \land \mathbb{C} ; \emptyset \rangle
\end{equation*}
and therefore the rule \emph{no} can be applied to $\fun{chr}(\sigma)$ by definition~\ref{def:chr_operational_semantics}. This leads to the state $\rho'$ with
\begin{equation*}
 \rho' \equiv \langle \constr{gamma}(b^*,c^*,0) \uplus  \mathbb{G} ; \mathbb{C} ; \emptyset \rangle.
\end{equation*}
By definition~\ref{def:translation_states}, we have that $\rho' \equiv \fun{chr}(\sigma)$.

The other direction is analogous.
\end{proof}


\begin{lemma}[completeness of rule applications]
\label{lemma:completeness_rules}
Let $\rho, \rho'$ be CHR states that have been translated from ACT-R states, i.e. there are ACT-R states $\sigma, \sigma'$ such that $\rho := \fun{chr}(\sigma)$ and $\rho' := \fun{chr}(\sigma')$. Let $r$ be a CHR rule translated from an ACT-R rule $s$. If $\rho \mapsto_r \rho'$ then $\sigma \utrans_{s} \sigma'$.
\end{lemma}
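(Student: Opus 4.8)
The plan is to reverse the chain of state equivalences established in the proof of Lemma~\ref{lemma:soundness_rules}, reading each step backwards. First I would fix $r := \fun{chr}(s)$ for an ACT-R rule $s := L \Rightarrow R$ in set-normal form and a state $\sigma := \langle \Delta ; \gamma ; \addinfo \rangle$ with $\rho = \fun{chr}(\sigma)$. Since $\rho \mapsto_r \rho'$, by Definition~\ref{def:chr_operational_semantics} there is a representative of $[\rho]$ in which the head of $r$ --- namely $\constr{delta}(D)$ together with one $\constr{gamma}(b,\fun{cvar}(b),\fun{dvar}(b))$ per buffer --- is matched and the guard $\bigwedge_{\test{b}{t}{P} \in L}(\fun{chunk}(\fun{cvar}(b),t,P)~\constr{in}~D \land \fun{dvar}(b){=}0)$ is entailed by the built-in store. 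Because $\rho = \fun{chr}(\sigma)$, Definition~\ref{def:translation_states} guarantees that this representative contains exactly one $\constr{delta}$ constraint and exactly one $\constr{gamma}$ constraint per buffer, so the matching substitution is forced to bind $\fun{cvar}(b)$ to $\id_\Delta(c)$ where $\gamma(b) = (c,d)$, and the guard forces $d = 0$ for every tested buffer.

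The crucial step is to turn the entailed guard back into an ACT-R matching. By Definition~\ref{def:builtin:chunk_in}, the entailment of $\fun{chunk}(\fun{cvar}(b),t,P)~\constr{in}~D$ means that $D$ contains a $\fun{chunk}$ term whose type is $t$ and whose slot list unifies with $P$ under some substitution $\theta$; combined with $\fun{dvar}(b){=}0$ this is precisely the condition $\gamma(b) = ((t,\val),0)$ with $\id_\Delta(\val(s)) = v\theta$ for all $(s,v) \in P$ of Definition~\ref{def:matching}. Hence $s \matches \sigma$, and taking $\theta := \fun{Bindings}(s,\sigma)$ we obtain $(s,\theta) \in S_{\mathit{abs}}(\sigma)$, so the \emph{apply} transition of $s$ is enabled in $\sigma$.

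It remains to identify the post-state. The body of $r$ consists only of the built-in constraints $\constr{action}, \constr{merge}, \constr{map}$ together with the new $\constr{delta}$ and $\constr{gamma}$ constraints. By Lemma~\ref{lemma:soundness_actions} the conjunction of these built-ins is equivalent to $D^* = \fun{chr}(\Delta^*) \land D' = \fun{chr}(\Delta \chunkmerge \Delta^*)$ together with the binding equations for $\fun{mergeid}(b)$ and $\fun{resdelay}(b)$, for the particular effect $(\Delta^*,\gamma^*,\addinfo^*) \in I(s\theta,\sigma)$ selected by the (possibly non-deterministic) $\constr{action}$ relation in this derivation. Reading Definition~\ref{def:translation_states} forward on this data shows that $\rho' \equiv \fun{chr}(\tilde\sigma)$ for $\tilde\sigma := \langle \Delta \chunkmerge \Delta^* ; \gamma' ; \addinfo \land \addinfo^* \rangle$, which is exactly the state produced by the ACT-R rule transition $\sigma \utrans^s \tilde\sigma$ using the same effect.

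The main obstacle I anticipate is the final identification $\sigma' = \tilde\sigma$. The hypothesis only gives $\rho' = \fun{chr}(\sigma')$, so I must argue that the translation is injective up to the equivalences already built into the framework: the freedom in the ordering of $\fun{chunk}$ terms inside the $\constr{delta}$ list and the chunk-store isomorphism induced by the non-commutativity of $\chunkmerge$. Concretely, from $\fun{chr}(\sigma') \equiv \fun{chr}(\tilde\sigma)$ I would recover that $\sigma'$ and $\tilde\sigma$ have the same cognitive state (each $\constr{gamma}$ constraint is uniquely determined) and isomorphic chunk stores, hence denote the same ACT-R state, giving $\sigma \utrans^s \sigma'$. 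Care is also needed to guarantee that the single-$\constr{delta}$ assumption of Definition~\ref{def:translation_states} is preserved so that $\rho'$ really lies in the image of $\fun{chr}$; this is where the set-normal form of $s$, ensuring every buffer tested on the left is rewritten on the right, is used exactly as in the soundness proof.
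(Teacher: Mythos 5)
Your proposal is correct and follows essentially the same route as the paper's proof: decompose $\rho$ via the CHR transition (Definition~\ref{def:chr_operational_semantics}), recover the ACT-R matching $s \matches \sigma$ by reading the guard and the translation Definitions~\ref{def:translation_states} and~\ref{def:builtin:chunk_in} backwards, and invoke Lemma~\ref{lemma:soundness_actions} to identify $\rho'$ with the translation of the ACT-R successor state. Your explicit treatment of the final identification $\sigma' = \tilde\sigma$ --- injectivity of $\fun{chr}$ up to chunk ordering in the $\constr{delta}$ list and isomorphism of chunk stores induced by $\chunkmerge$ --- is a point the paper's proof leaves implicit (it simply asserts $\rho' \equiv \fun{chr}(\sigma')$), so this is added care within the same argument rather than a different approach.
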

\begin{proof}
 The CHR rule $r := H \Leftrightarrow G ~|~ B_c, B_b$ has been translated from an ACT-R rule $s$. Let $s := L \Rightarrow R$ be an ACT-R rule in set normal form. Then $r$ has the following form by definition~\ref{def:translation_rules}:
 \begin{align*}
    &\constr{delta}(D) \uplus \{ \constr{gamma}(b,\fun{cvar}(b),\fun{dvar}(b)) ~|~ b \in \buffers \} \\
        \Leftrightarrow & \\
       & \bigwedge_{\test{b}{t}{P} \in L} \fun{chunk}(\fun{cvar}(b),t,P) ~\constr{in}~ D \land \fun{dvar}(b) {=} 0 ~|~ \\
        &     \{ \constr{delta}(D^*)\} ~\uplus \\
         &                   \{ \constr{gamma}(b,\fun{mergeid}(b),\fun{resdelay}(b)) ~|~ a(b,t,P) \in R \} ~\uplus  \\
         &                   \{ \constr{gamma}(b,\fun{cvar}(b),\fun{dvar}(b)) ~|~ a(b,t,P) \notin R \} ,\\
        & \bigwedge_{\alpha = a(b,t,P) \in R} (\constr{action}(\alpha, D, \fun{cogstate}(\buffers), \fun{resstore}(b), \fun{resid}(b), \fun{resdelay}(b)))  \\
        \land~ & \constr{merge}([\fun{resstore}(b) : a(b,t,P) \in R],D') \\
        \land~ & \constr{merge}([D,D'],D^*) \\
        \land~ & \bigwedge_{a(b,t,P) \in R} ( \constr{map}(D,D',\fun{resid}(b),\fun{mergeid}(b)))  
 \end{align*}
 where $B_b$ are the built-in constraints of the body and $B_c$ the CHR constraints.
 
Since $r$ is applicable in $\rho$, by definition~\ref{def:chr_operational_semantics}, $\rho$ must have the following form:
\begin{equation*}
 \rho \equiv  \langle H \uplus \mathbb{G} ; G \land \mathbb{C} ; \mathbb{V} \rangle
\end{equation*}
Thereby, $\mathbb{G}$ is a multi-set of user-defined constraints and $\mathbb{C}$ a conjunction of built-in constraints. Since by definition of ACT-R states, $H \uplus \mathbb{G}$ must be ground and therefore there must be some built-in constraints $\Theta$ in $\mathbb{C}$ that bind those variables to the values from the state:
\begin{equation*}
 \rho \equiv  \langle H \uplus \mathbb{G} ; \Theta \land G \land \mathbb{C'} ; \mathbb{V} \rangle
\end{equation*}
$\Theta$ is a conjunction of constraints of the form $V = c$ for a variable $V$ and a term $c$, that binds the variables from $H$ and $G$ to the values from the state, i.e. the variables $\fun{cvar}(b), \fun{dvar}(b)$ (for all $b \in \buffers$) and the variables appearing in the $P$ from the guard are bound to some values from the state. We denote $\theta$ as the substitution that follows from $\Theta$. Furthermore, $\rho$ must contain the additional information $\addinfo$:
\begin{equation*}
 \rho \equiv  \langle H \uplus \mathbb{G} ; \Theta \land G \land \addinfo \land \mathbb{C''} ; \mathbb{V} \rangle
\end{equation*}

 We now want to construct the ACT-R state $\sigma$. Let $\sigma := \langle \Delta ; \gamma ; \addinfo' \rangle$. Thereby, $\fun{chr}(\Delta) = D$ and for all $b \in \buffers : \gamma(b) = (\id_\Delta^{-1}(\fun{cvar}(b)),\fun{dvar}(b))\theta$ by definition~\ref{def:translation_states}. Additionally, $\addinfo' := \addinfo$. We can now apply the same equivalences as in lemma~\ref{lemma:soundness_rules} in reverse order and get 
\begin{align*}
\rho \equiv \langle & \{ \constr{delta}([\fun{chunk}(\id_\Delta(c),t,\llbracket \val \rrbracket) ~:~ \test{b}{t}{P} \in L \land \gamma(b) = (c,0) \land c = (t,\val)\\
                    & ~~\land \text{ for all }(s,v) \in P : \id_\Delta(\val(s)) = v\theta] {++} \mathbb{D}) \} \\
                    & \uplus \{ \constr{gamma}(b,\id_\Delta(c),0) ~|~ \test{b}{t}{P} \in L \land \gamma(b) = (c,0) \} \uplus \mathbb{G} ; \Theta \land G \land \addinfo \land \mathbb{C}'' ; \emptyset \rangle
\end{align*}
From there it is clear that $s \matches \sigma$ by definition~\ref{def:matching}. We now apply $s$ to $\sigma$ according to definition~\ref{def:unified_operational_semantics} 
and get
\begin{equation*}
 \sigma' := \langle \Delta \chunkmerge \Delta^* ; \gamma' ; \addinfo \land \addinfo^* \rangle
\end{equation*}
where  $(\Delta^*, \gamma^*, \addinfo^*) \in I(s\theta,\sigma)$ and $\gamma'(b) := \begin{cases}
                       (\fun{map}_{\Delta,\Delta^*}(c),d) & \text{if } \gamma^*(b) = (c,d) \text{ is defined}\\
                       (\fun{map}_{\Delta,\Delta^*}(c),d)          & \text{otherwise, if $\gamma(b) = (c,d)$.}
                      \end{cases}
$

By definition~\ref{def:chr_operational_semantics}, the application of $r$ in $\rho$ leads to the state $\rho'$ with
\begin{equation*}
 \rho' \equiv \langle B_c \uplus \mathbb{G} ; G \land B_b \land \Theta \land \addinfo ; \mathbb{V} \rangle.
 \end{equation*}
By lemma~\ref{lemma:soundness_actions}, we get that $B_b$ is equivalent to
\begin{align*}
     D^* = \fun{chr}(\Delta^*) \land D' = \fun{chr}(\Delta \chunkmerge \Delta^*) \land \\\bigwedge_{b \in \fun{dom}(\gamma^*)} (\fun{mergeid}(b) = \id_{\Delta \chunkmerge \Delta^*}(\fun{map}_{\Delta,\Delta^*}(c^*)) \land \fun{resdelay}(b) = d_b^*) \land \addinfo^* \land \addinfo
       \end{align*}
       
Hence, $\rho'$ and $\sigma'$ correspond directly to each other, i.e. $\rho' \equiv \fun{chr}(\sigma')$ and therefore rule transitions in the translation are complete w.r.t. the abstract operational semantics of ACT-R.
\end{proof}


\begin{theorem}[soundness and completeness of translation]
For an ACT-R model $M$, ACT-R states $\sigma$ and $\sigma'$ and their CHR counterparts $\fun{chr}(M), \fun{chr}(\sigma)$ and $\fun{chr}(\sigma')$ the following propositions are equivalent:
\begin{enumerate}
 \item $\sigma \utrans \sigma'$ in the model $M$ \label{theorem:soundness_completeness:actr}
 \item $\fun{chr}(\sigma) \mapsto \fun{chr}(\sigma')$ \label{theorem:soundness_completeness:chr}
\end{enumerate}
\end{theorem}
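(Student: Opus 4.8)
The plan is to reduce the theorem to a case analysis over the two kinds of transitions available on each side, invoking the lemmas already established rather than re-deriving anything. The transition relation $\utrans$ of the abstract semantics is, by definition~\ref{def:unified_operational_semantics}, the union of the rule-application transitions $\utrans^r$ (one for each $r \in \rules$) and the no-rule transition $\utrans_\mathbf{no}$. Symmetrically, the CHR program $\fun{chr}(M)$ consists, by definition~\ref{def:translation_rules} together with the generic no-rule rule, of exactly the translated rules $\{ \fun{chr}(r) ~|~ r \in \rules \}$ plus the single rule $no$. Hence every CHR transition $\fun{chr}(\sigma) \mapsto \fun{chr}(\sigma')$ is either a transition $\mapsto^r$ by some translated rule $\fun{chr}(r)$ or the transition $\mapsto^{\mathit{no}}$. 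Establishing this structural dichotomy on both sides is the first step, and it is what makes the case split exhaustive.

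For the direction \ref{theorem:soundness_completeness:actr} $\Rightarrow$ \ref{theorem:soundness_completeness:chr}, I would assume $\sigma \utrans \sigma'$ and split on which transition was used. If it is a rule application $\sigma \utrans^r \sigma'$, then lemma~\ref{lemma:soundness_rules} gives $\fun{chr}(\sigma) \mapsto^r \fun{chr}(\sigma')$ directly. If it is a no-rule transition $\sigma \utrans_\mathbf{no} \sigma'$, the forward direction of lemma~\ref{lemma:soundness_completeness_norule} gives $\fun{chr}(\sigma) \mapsto^{\mathit{no}} \fun{chr}(\sigma')$. In both subcases we obtain $\fun{chr}(\sigma) \mapsto \fun{chr}(\sigma')$.

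For the converse \ref{theorem:soundness_completeness:chr} $\Rightarrow$ \ref{theorem:soundness_completeness:actr}, I would assume $\fun{chr}(\sigma) \mapsto \fun{chr}(\sigma')$ and, using the structural observation above, split on whether the applied CHR rule is the translation $\fun{chr}(s)$ of some $s \in \rules$ or the generic rule $no$. In the first case, since by hypothesis both $\fun{chr}(\sigma)$ and $\fun{chr}(\sigma')$ lie in the image of the translation, the premises of lemma~\ref{lemma:completeness_rules} are satisfied, so $\sigma \utrans^s \sigma'$. In the second case, the backward direction of lemma~\ref{lemma:soundness_completeness_norule} gives $\sigma \utrans_\mathbf{no} \sigma'$. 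Either way $\sigma \utrans \sigma'$, closing the equivalence.

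The delicate point — and the one the lemmas already discharge — is the correspondence of the \emph{target} states. Lemma~\ref{lemma:completeness_rules} presupposes that the post-transition CHR state is itself a translation of some ACT-R state, which is precisely the theorem's standing hypothesis that $\fun{chr}(\sigma')$ is the CHR counterpart of $\sigma'$; the substantive content, namely that the translation is closed under CHR transitions and that the reached state is equivalent to $\fun{chr}(\sigma')$, is carried through the built-in reductions of lemma~\ref{lemma:soundness_actions} inside the proofs of lemmas~\ref{lemma:soundness_rules} and~\ref{lemma:completeness_rules}. The only remaining bookkeeping is that rule names are preserved by the translation (each $\fun{chr}(r)$ is named $r$), so the indexing of $\utrans^r$ against $\mapsto^r$ matches without ambiguity, and no CHR rule outside $\{ \fun{chr}(r) \} \cup \{ no \}$ can ever fire.
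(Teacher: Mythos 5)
Your proposal is correct and takes essentially the same approach as the paper: the paper's own proof is the single sentence that the theorem ``follows directly from lemmas~\ref{lemma:soundness_rules}, \ref{lemma:soundness_completeness_norule} and \ref{lemma:completeness_rules},'' and your case analysis over rule-application versus no-rule transitions on each side is precisely the bookkeeping that this one-line proof leaves implicit. The exhaustiveness of the dichotomy and the matching of rule names, which you spell out, are taken for granted by the paper but add nothing beyond its intended argument.
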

\begin{proof}
This follows directly from lemmas~\ref{lemma:soundness_rules}, \ref{lemma:soundness_completeness_norule} and \ref{lemma:completeness_rules}.
\end{proof}

\section{Related Work}
\label{sec:related_work}

We want to highlight the contribution of \cite{albrecht_2014a} that we have used as a starting point to improve our work by unifying and extending it by our needs. We discuss this line of work in detail in section~\ref{sec:related_work:albrecht}. In section~\ref{sec:related_work:other} we summarize other work related to this paper.

\subsection{Formal Semantics According to Albrecht et al.}
\label{sec:related_work:albrecht}

The formalization according to \cite{albrecht_2014a} has been developed independently from our work in \cite{gall_rule_based_2013,gall_lopstr2014}. Our very abstract semantics is based on it. \cite{albrecht_2014a} 
basically defines a general production rule system that works on sets of buffers and chunks without specifying actual matching, actions and effects for the sake of modularity and reusability. We briefly summarize the differences between the semantics in \cite{albrecht_2014a}  and our very abstract semantics. For details, we refer to the original papers. The nomenclature in this paper differs in some points from the original paper \cite{albrecht_2014a} to unify it with our previous work. We omit module queries for the sake of brevity.

The sets of buffers $\buffers$ and action symbols $A$ are defined as in section~\ref{sec:preliminaries:actr}. For the sake of brevity, we have omitted the so-called buffer queries in our definition of the very abstract semantics. Queries are an additional type of test on the left-hand side of a rule. The very abstract semantics can be easily extended by queries. We have adopted the definition of chunk types, chunks and cognitive states from the formalization of \cite{albrecht_2014a}, although the set of chunks in \cite{albrecht_2014a} should be a multi-set as example~\ref{ex:chunk_store_nat} shows. However, we have reduced the definition in our very abstract semantics by omitting the notion of a \emph{finite trace}, which is a sequence $\gamma_0, \gamma_1, \dots \in \cogstates^*$ of cognitive states. Those traces are used to compute the effects of an action. This definition seems inaccurate as the information of a finite trace that only logs the contents of the buffers at each step does not suffice to calculate sub-symbolic information. In typical definitions the calculation of production rule utilities needs the times of all rule applications that are not part of the trace. In other implementations and instantiations of ACT-R, there can be more additional information that is needed for sub-symbolic calculations. That is why we have extended the states by a parameter valuation function that abstracts from the information needed and leaves it to the architecture to define which information is stored.

In \cite{albrecht_2014a}, effects of actions with action symbol $\alpha \in A$ are defined by an interpretation function $I_\alpha : \traces \rightarrow 2^{\cogstatespart \times 2^\Delta}$ (we have omitted queries as stated before). Similarly to the very abstract semantics, it assigns to each finite trace the possible effects of an action. Effects are a partial cognitive state that overwrites the contents of the buffers as in the very abstract semantics and a set $C \subseteq \Delta$ that defines the chunks that are removed. In typical implementations of ACT-R, the chunks in $C$ are moved to the declarative module which explains the need to define such a set. We have generalized this information by the notion parameter valuations that can be manipulated by an interpretation function. This enables us to abstract from the specific concept of moving chunks to declarative memory in our abstract semantics for example. Note that in \cite{albrecht_2014a}, the combination of interpretation functions to a rule interpretation is only stated informally. Additionally, we have extended the domain of an action interpretation function to actions, i.e. terms over the actions symbols in $A$, and states instead of only action symbols, since more information is needed to calculate, like the parameters of the actions (i.e. the slot-value pairs) and information from the state.

The production rule selection function $S : \traces \rightarrow 2^\rules$ maps a set of applicable rules to each finite trace. In the very abstract semantics we have extended the domain from traces to a whole state since again additional information might be needed to resolve rule conflicts. With parameter valuations, we abstract from the information that is actually needed and leave it to the architecture definition. Additionally, our definition of selection function adds the notion of variable bindings that are not considered in \cite{albrecht_2014a}.

The operational semantics in \cite{albrecht_2014a} is defined as a labeled, timed transition system with the following transition relation $\rightsquigarrow$ over time-stamped cognitive states from $\cogstates \times \mathbb{R}_0^+$:
\begin{equation*}
(\gamma,t)_\pi \mathrel{\mathop{\rightsquigarrow}^{r,d,\omega}} (\gamma',t')
\end{equation*}
for a production rule $r \in \rules$, an execution delay $d \in \mathbb{R}^+_0$, a set of chunks $\omega \subseteq \Delta$ and a finite trace $\pi \in \traces$, if and only if $r \in S(\pi,\gamma)$, i.e. $r$ is applicable in $\gamma$, the actions of $r$ according to the interpretation functions yield $\gamma'$ and $t' = t + d$. 

Note that the set of chunks $\omega$ has been used but never defined in the original paper \cite{albrecht_2014a}. We suspect that it represents an equivalent to the chunk store from our abstract semantics, i.e. the used subset of all possible chunks (which is how $\Delta$ is defined according to the paper). 
Although we consider it an integral part of ACT-R, the matching of rules -- and particularly binding of variables by the matching -- is completely hidden in $S$ or even not defined. On the one hand this simplifies exchanging the matching, on the other hand the function $S$ should then be defined slightly different to enable proper handling of variable bindings and conflict resolution as we discuss in section~\ref{sec:very_abstract_semantics}. 

In the original semantics according to \cite{albrecht_2014a} 
there is no definition of what happens if there is no rule applicable, but there are still effects of e.g. requests that can be applied. We have treated this case by adding the \emph{no rule} transition to the very abstract semantics.

\subsection{Other Work}
\label{sec:related_work:other}

The reference implementation of ACT-R is described in a technical document \cite{actr_reference} that defines the operational semantics mostly verbally and determines various technical details that are important for this exact implementation but not the architecture itself. In \cite{gall_ppdp2015}, we have defined a semantics that describes the core of the reference implementation of ACT-R and show that every transition possible in this refined semantics is also possible in the abstract semantics. This shows that formal reasoning about our abstract semantics is meaningful to actual implementations.

There are approaches of implementing ACT-R in other languages, for example a Python implementation \cite{stewart_deconstructing_2007} or (at least) two Java implementations \cite{jactr,java_actr}. All those approaches do not concentrate on formalization and analysis, but only introduce new implementations. In \cite{stewart_deconstructing_2007} it is stated that exchanging integral parts of the ACT-R reference implementation is difficult due to the need of an extensive knowledge of technical details. They propose an architecture that is more concise and reduced to the fundamental concepts (that they also identify in their paper). However, their work still lacks a formalization of the operational semantics.

In \cite{albrecht_2014b}, the authors summarize the work on semantics in the ACT-R context. They also come to the conclusion that there are only new implementations available that sometimes try to formalize parts of the architecture, but no formal definition of ACT-R's operational semantics. The authors use this result as a motivation for their work in \cite{albrecht_2014a}.

We describe an adaptable implementation of ACT-R using Constraint Handling Rules (CHR) in \cite{gall_rule_based_2013,gall_iclp_2014,gall_iccm_2015} that is based on our formalization. Due to the declarativity of CHR, the implementation is very close to the formalization and easy to extend. This has been proved by exchanging the conflict resolution mechanism (that is an integral part of typical implementations) with very low effort \cite{gall_iclp_2014}. Even the integration of refraction, i.e. inhibiting rules to fire twice on the same (partial) state, has been exemplified and can be combined with other conflict resolution strategies. The translation presented there is close to both the core of the reference implementation and the abstract semantics whose abstract parts are defined such that the match the reference implementation and some of its extensions.

\section{Conclusion}

In this paper, we have defined a very abstract operational semantics for ACT-R that can serve as a common base to analyze other operational semantics since it leaves enough room for various ACT-R variants. We then have refined this semantics to an abstract semantics. 

Similar to the very abstract semantics, the abstract semantics abstracts from details like timings, latencies, forgetting, learning and specific conflict resolution. However, it defines the matching of rules and the processing of actions as they are typically found in ACT-R implementations. Hence, the abstract semantics concentrates on an abstract version of the typical implementations of ACT-R's procedural system. This makes it possible to reason about the general transitions that are possible in many ACT-R implementations.

We have defined a translation of ACT-R models to Constraint Handling Rules (CHR) that is sound and complete w.r.t. our abstract semantics of ACT-R. 
To the best of our knowledge, the abstract semantics together with the sound and complete translation to CHR is the first formal formulation of ACT-R that is suitable to implementation.

For the future, we want to investigate how we can use our abstract semantics in practice, since the faithful embedding in CHR opens many possibilities to reason about cognitive models by applying theoretical results from the CHR world to ACT-R models. For instance, confluence is the property that a program always yields the same result for the same input regardless of the order rules are applied. In CHR, there is a decidable confluence criterion for terminating programs \cite{fru_chr_book_2009}. Although the human mind is probably not confluent because there are many competing strategies with different outputs for the same task, there are always sequences of rules in cognitive models that should not be interfered by any other rule. A confluence criterion helps identifying the parts of the model that are not confluent. This can improve model quality by allowing for controlled non-confluence where desired guaranteeing the rules in the rest of the program not interfering with each other.

However, in practice confluence usually is too strict. With the notion of invariant-based confluence \cite{duck_stuck_sulz_observable_confluence_iclp07} only valid states that can be reached are considered, making confluence analysis applicable for practical use. Confluence modulo equivalence \cite{christiansen_confmeq_arxiv_2016} is a recent approach to test if programs always yield states that are considered equivalent by an arbitrary definition of state equivalence for the same input. This could be used to analyze if an ACT-R model always yields a certain class of chunks for the same input. For instance it could be interesting for the modeler to know if a certain buffer always contains a chunk of a certain chunk type or with a certain value in some slot at the end of a computation. By that method, models could guarantee certain properties on their final states improving explanatory power and quality of cognitive models.

To make predictions on the probability that a cognitive model has a certain result, we plan to use the CHR extension CHRiSM \cite{sneyers2010chr,sneyers2009chrism} that allows to enrich CHR rules with probabilities. It supports probability computation and even an expectation-maximization learning algorithm that could be used for parameter learning of cognitive models.

%
%

\bibliographystyle{ACM-Reference-Format}
\bibliography{bib}

%
%
%
%
%
%

\end{document}